\pdfoutput=1
\newif\ifFull
\Fulltrue
\ifFull
\documentclass[11pt]{article}
\topmargin 0pt
\advance \topmargin by -\headheight
\advance \topmargin by -\headsep
\textheight 9in
\oddsidemargin 0pt
\evensidemargin \oddsidemargin
\marginparwidth 0.5in
\textwidth 6.5in
\setlength{\pdfpagewidth}{8.5in}
\setlength{\pdfpageheight}{11in}
\else
\documentclass{llncs}
\fi
\usepackage{color}
\usepackage{graphicx}
\usepackage{url}
\usepackage[noend]{algorithmic}
\usepackage{amsmath}
\usepackage{times}

%
%
\makeatletter
\def\@begintheorem#1#2{\sl \trivlist \item[\hskip \labelsep{\bf #1\ #2:}]}
\def\@opargbegintheorem#1#2#3{\sl \trivlist
      \item[\hskip \labelsep{\bf #1\ #2\ #3:}]}
\makeatother

\newcommand{\sort}{\mbox{o-sort}}

\ifFull
\newtheorem{theorem}{Theorem}
\newtheorem{lemma}[theorem]{Lemma}

\newenvironment{proof}{\noindent{\bf Proof:}}{\hspace*{\fill}\rule{6pt}{6pt}\bigskip}
\fi

\renewcommand{\Pr}{\mathop{\bf Pr}\nolimits}
\newcommand{\E}{\mathop{\bf E}\nolimits}
\ifFull\else
\let\oldendproof\endproof
\def\endproof{\qed\oldendproof}
\renewcommand{\subsection}[1]{\paragraph{#1.}}
\fi

\hyphenpenalty=500
\begin{document}

\title{Privacy-Preserving Access of Outsourced Data \\ 
       via Oblivious RAM Simulation}

\ifFull
\author{
{Michael T. Goodrich} \\
University of California, Irvine 
\and
Michael Mitzenmacher \\
Harvard University}
\else
\author{
{Michael T. Goodrich}$^\dag$ \and
Michael Mitzenmacher$^\sharp$}
\institute{$^\dag$University of California, Irvine \hspace{2em}
	   $^\sharp$Harvard University}
\fi

\date{}

\maketitle 

\begin{abstract}
\ifFull
Suppose a client, Alice, has
outsourced her data to an external storage provider, Bob,
because he has capacity
for her massive data set, of size $n$, whereas her private storage is
much smaller---say,
of size $O(n^{1/r})$, for some constant $r>1$.
Alice trusts Bob to maintain her data, but she would
like to keep its contents private.
She can encrypt her data, of course, but she also wishes to keep her
access patterns hidden from Bob as well.
\fi
We describe schemes
for the \emph{oblivious RAM simulation} problem
with a small logarithmic or
polylogarithmic amortized
increase in access times, with a very high probability of success,
while keeping the external storage 
to be of size $O(n)$.
\ifFull
To achieve this,
our algorithmic contributions include a parallel MapReduce
cuckoo-hashing algorithm 
and an external-memory data-oblivious sorting algorithm.

\textbf{Keywords.} oblivious RAM simulation, cuckoo hashing, sorting,
outsourced data, privacy.
\fi
\end{abstract}

\section{Introduction}

Suppose Alice owns a large data set, which she 
outsources to an honest-but-curious server, Bob.
\ifFull
Alice trusts Bob to reliably maintain her data, to update it as requested,
and to accurately answer queries on this data.
But she does not trust Bob to keep her information confidential, either
because he has a commercial interest in her
data or because she fears security leaks by Bob (or his employees)
that could allow third parties 
to learn about Alice's data.  
Such suspicions are well-founded, of course, since both scenarios
are quite possible nowadays with respect to well-known 
commercial data outsourcers and email services.

\fi
For the sake of privacy, Alice can, of course, encrypt the 
cells of the
data she stores with Bob. But encryption is not enough, as Alice can 
inadvertently reveal information
about her data based on how she accesses it.
\ifFull
For example, Chen {\it et al.}~\cite{cwwz-sclwa-10} show that 
sensitive information can be inferred from the access patterns of popular
health and financial web sites even if the contents of these communications
are encrypted.
\fi
Thus, we desire that Alice's access sequence 
(of memory reads and writes)
is \emph{data-oblivious}, that is, the probability distribution for Alice's
access sequence should depend only on the size, $n$, 
of the data set and the number of memory accesses.
Formally, we say a computation is data-oblivious if
$\Pr(S\, |\, {\cal M})$,
the probability that Bob sees an access sequence, $S$,
conditioned on a specific configuration of his memory (Alice's
outsourced memory), $\cal M$, satisfies
$
\Pr(S\, |\, {\cal M}) = \Pr(S\, |\, {\cal M}'),
$
for any memory configuration
${\cal M}' \not= {\cal M}$ such that $|{\cal M}'|=|{\cal M}|$.
In particular, Alice's access sequence
should not depend on the values of any 
set of memory cells in the outsourced memory that Bob maintains for Alice.
To provide for full application generality, we assume outsourced data 
is indexed and we allow Alice to make 
arbitrary indexed accesses to this data for queries and updates.
That is, let us assume this outsourced data model is as general as the
random-access machine (RAM) model.
\ifFull
By accessing her data in a data-oblivious fashion, Alice guarantees that no
inferences about her computations are possible beyond those inferable
from time and space bounds, which themselves can be masked through padding.
\fi

Most computations that Alice would be likely to perform on her
outsourced data are not naturally data-oblivious.
We are therefore interested in this paper in simulation
schemes that would allow Alice to make her access sequence data-oblivious
with low overhead.
For this problem, which is known as 
\emph{oblivious RAM simulation}~\cite{go-spsor-96},
we are primarily interested in the case where Alice has a relatively small
private memory, say, of constant size or size that is $O(n^{1/r})$, 
for a given constant $r > 1$.

\subsection{Our Results}
In this paper, we show how Alice can
perform an oblivious RAM simulation,
with very high probability\footnote{We show that our simulation fails to be
  oblivious with negligible probability; that is,
  the probability that the algorithm
  fails can be shown to be
  $O\left( \frac{1}{n^{\alpha}} \right)$ for any $\alpha>0$.  
  We say an event holds with {\it very high probability}
  if it fails with negligible probability.},
with an amortized time overhead of $O(\log n)$
and with $O(n)$ storage overhead purchased from Bob, while using a
private memory of size $O(n^{1/r})$, for a given constant $r>1$. 
With a constant-sized memory, we show that she can do this simulation with
overhead $O(\log^2 n)$, with a similarly high probability of success.
%
At a high level, our result shows that Alice can leverage the privacy
of her small memory 
to achieve privacy in her much larger outsourced data set of size $n$.
\ifFull
For instance, with $r=4$,
Alice could use a private memory of order one megabyte to
achieve privacy in an outsourced memory with
size on the order of one yottabyte.
\fi
Interestingly, our techniques 
involve the interplay of some
seemingly unrelated new results, which may be of
independent interest, including 
an efficient MapReduce parallel algorithm for cuckoo hashing
and
a novel deterministic data-oblivious external-memory sorting algorithm.
\ifFull
We combine these results to achieve our improved oblivious RAM
simulations by exploiting some of the unique performance characteristics of
cuckoo hashing together with efficient data-oblivious methods for performing
cuckoo hashing.
\fi

\subsection{Previous Related Results}
\ifFull
Pippenger and Fischer~\cite{pf-racm-79} are the first to study the
simulation of one computation in a given model with an oblivious one in 
a related model, in that they show how to simulate
a one-tape Turing machine computation of length $n$ with
an oblivious two-tape Turing machine computation of
length $O(n\log n)$, i.e., with an $O(\log n)$ overhead.

\fi
Goldreich and Ostrovsky~\cite{go-spsor-96} introduce
the oblivious RAM simulation problem and show that it requires an
overhead of $\Omega(\log n)$ under some reasonable assumptions about the
nature of such simulations.
For the case where Alice has only a constant-size private
memory,
they show how Alice can easily achieve an overhead of $O(n^{1/2}\log n)$,
with $O(n)$ storage at Bob's server, and,
with a more complicated scheme, how Alice can achieve an overhead of
$O(\log^3 n)$ with $O(n\log n)$ storage at Bob's server.

Williams and Sion~\cite{DBLP:conf/ndss/WilliamsS08}
study the oblivious RAM simulation problem for the case when the data
owner, Alice, has a
private memory of size $O({n}^{1/2})$,
achieving an expected amortized time overhead of $O(\log^2 n)$ using
$O(n\log n)$ memory at the data provider.
Incidentally,
Williams {\it et al.}~\cite{wsc-bcomp-08} claim an
  alternative method that uses an $O({n}^{1/2})$-sized private memory
  and achieves $O(\log n\log\log n)$ amortized time overhead with
  a linear-sized outsourced storage, but some
  researchers (e.g., see~\cite{pr-orr-20}) 
  have raised concerns with the assumptions and analysis of
  this result.

The results of this paper were posted by the authors in preliminary 
form as~\cite{arxiv1007.1259}.
Independently,
Pinkas and Reinman~\cite{pr-orr-20} published an oblivious RAM simulation
result for the case where Alice maintains a 
constant-size private memory, claiming that Alice can achieve
an expected amortized
overhead of $O(\log^2 n)$ while using $O(n)$ storage space at the
data outsourcer, Bob.
Unfortunately, their construction contains a flaw that allows Bob to
learn Alice's access sequence, with high probability, in some cases,
which our construction avoids.

Ajtai~\cite{a-orwca-10} shows how oblivious RAM simulation can be
done with a polylogarithmic
factor overhead 
without cryptographic assumptions about the existence of
random hash functions, as is done in previous 
papers~\cite{go-spsor-96,pr-orr-20,DBLP:conf/ndss/WilliamsS08,wsc-bcomp-08},
as well as
any paper that derives its security or privacy from the random oracle model
(including this paper).
A similar result is also given by Damg\aa{}rd {\it et al.}~\cite{dmn-psor-10}.
Although these results address interesting theoretical limits 
of what is achievable without random hash functions,
we feel that the assumption about
the existence of random hash functions
is actually not a major obstacle in practice, given the ubiquitous
use of cryptographic hash functions.

\ifFull
In addition to the previous upper-bound results,
Beame and Machmouchi~\cite{bm-mrors-10} show that if the
additional space utilized at the data outsourcer (besides the space
for the data itself) is sufficiently sublinear, then the overhead for
oblivious RAM simulation has a superlogarithmic lower bound.
We note that our logarithmic-overhead result does not contradict
this lower bound, however, since we assume
Alice can afford to pay for $O(n)$ additional space at the data outsourcer on
top of the space for her $n$ data items (but
she would prefer, say, not to pay for $O(n\log n)$ additional space).
\fi

\section{Preliminaries}
\ifFull
Before we give the details for our oblivious RAM simulation result,
we need to review some results regarding some other topics.
\fi

\subsection{A Review of Cuckoo Hashing}
Pagh and Rodler \cite{pr-ch-04}
introduce \emph{cuckoo hashing}, which is a hashing scheme using
two tables, each with $m$ cells,
and two hash functions, $h_1$ and $h_2$, one for each table,
where
we assume $h_1$ and $h_2$ can be modeled as random hash functions for the
sake of analysis.
The tables store $n = (1-\epsilon)m$ keys, where one key can be held in
each cell, for a constant $\epsilon<1$.
\ifFull
Note that the total load factor is less than $1/2$.
\fi
Keys can be
inserted or deleted over time; the requirement is that at most 
$n = (1-\epsilon)m$ distinct keys are stored at any time.  A stored key $x$ 
should be located at either $h_1(x)$ or $h_2(x)$, and,
hence, lookups take constant time.  On insertion of a new key $x$, cell
$h_1(x)$ is examined.  If this cell is empty, $x$ is placed in this
cell and the operation is complete.  Otherwise, $x$ is placed in this
cell and the key $y$ already in the cell is moved to $h_2(y)$.  This
may in turn cause another key to be moved, and so on.  We say that a
failure occurs if, for an appropriate constant $c_0$, after 
$c_0 \log n$ steps this process has not
successfully terminated with all keys located in an appropriate cell.
Suppose we insert an $n$th key into the system.  
Well-known attributes of cuckoo hashing include:
\begin{itemize}
\item The expected time to insert a new key is bounded above by a constant.
\item The probability a new key causes a failure is $\Theta(1/n^2)$.  
\end{itemize}

\ifFull
Since their introduction,
several variations of cuckoo hashing have been developed and studied.
For instance, one can
use asymmetrically sized tables, or use a single table for both hash
functions.  Using more than two choices, or $d$-ary cuckoo hashing,
allows for higher load factors but requires more complex insertion
mechanisms \cite{dw-badtpcs-07}. Similarly, one can store more than
one key in a bucket to obtain higher loads at the expense of more
complex insertion mechanisms \cite{fpss-seht-05}.  Other variations
and uses of cuckoo hashing are described in
\cite{m-soq-09}.  Our work will focus on standard cuckoo hashing, for which
we describe a natural parallelization with provable performance guarantees.
\fi
Kirsch, Mitzenmacher, and
Wieder introduce the idea of utilizing a {\em stash}
\cite{kmw-chs-09}.  A stash can be thought of as an additional memory
where keys that would otherwise cause a failure can be placed.  In
such a setting, a failure occurs only if the stash itself overflows.
For $n$ items inserted in a two-table cuckoo hash table, the total
failure probability can be reduced to $O(1/n^{k+1})$ for any constant
$k$ using a stash that can hold $k$ keys.  
\ifFull
We can use this fact to handle
polynomial-time computations using cuckoo hash tables; any fixed
polynomial number of inserts and deletes can be handled with a
constant-sized stash, at the expense of having to check the stash at
each step.  
\fi
For our results, we require a generalization of
this result to stashes that are larger than constant sized.  
\ifFull
We present
this generalization in Appendix~\ref{app:stash} (and make note of where it is required).

There has not been substantial previous work specifically on 
parallel cuckoo hashing,
although historically there has been substantial work on parallel
hashing schemes for load balancing (see, e.g., \cite{acmr,DM2,klm}).  
Recently, Alcantara {\it et al.}~\cite{asasmoa-rtphg-09} sketch a parallel
construction method for a three-table cuckoo hashing scheme, as well
as giving a hybrid parallel-sequential cuckoo hashing method for GPU
computing.  Although they do not theoretically analyze their
fully-parallel algorithm, it is less efficient in terms of its
overall work bounds; hence, it
is less suitable for oblivious simulation than the two-table parallel
cuckoo hashing method we describe here.
Later in this paper, we provide a parallel algorithm for cuckoo
hashing that runs in $O(\log n)$ time and requires $O(n)$ total work, with 
very high probability,
by applying tail bounds on the performance of general
cuckoo hashing to a novel construction in the MapReduce parallel model.
\fi

\ifFull
\subsection{The MapReduce Parallel Model}
Historically, the PRAM model~\cite{j-ipa-92} and the
bulk-synchronous parallel (BSP) model~\cite{v-bmpc-90}
have provided general and powerful parallel models of computation.
In addition, the BSP model has, more recently, been shown to be 
a starting point for models for GPU programming~\cite{hzg-bbsgp-08},
cloud computing~\cite{rpcnh-mnmmb-09}, and
multi-core architectures~\cite{v-bmmcc-08}.
Nevertheless, 
the power and generality of these models makes them challenging to use as
parallel programming paradigms.

As an alternative, the MapReduce programming paradigm 
has been introduced to provide a simple approach to parallel programming
(e.g., see~\cite{dg-msdpl-08,DBLP:conf/soda/FeldmanMSSS08,ksv-amcfm-10}).
This programming paradigm is gaining widespread interest, as it
is used in Google data centers and has been implemented in the 
open-source Hadoop system~\cite{w-hdg-09} for server clusters.
The MapReduce paradigm has also been recently formalized in 
parallel computing models as 
well~\cite{DBLP:conf/soda/FeldmanMSSS08,ksv-amcfm-10}.
It is based on the specification of a computation in terms of a sequence of
map, shuffle, and reduce steps.
\fi

\ifFull
\subsection{Data-Oblivious Sorting}
Data-oblivious sorting has a long and interesting history, starting with the
earliest methods for constructing sorting networks 
(e.g., see~\cite{k-ss-73}).
Thus, for oblivious RAM sorting, there are several existing algorithms.
For instance,
if one is interested in an asymptotically optimal method, then one
can use the AKS sorting network, which is a deterministic
data-oblivious sorting method that runs in $O(n\log n)$
time~\cite{aks-scps-83}.
Unfortunately,
this method is 
not very practical, so in practice it would probably be more
advantageous to use either a simple deterministic suboptimal method,
like odd-even mergesort~\cite{b-snta-68}, which runs in $O(n\log^2 n)$
time, or a simple randomized method, like 
randomized Shellsort~\cite{g-rsaso-10}, which runs in $O(n\log n)$
time and sorts with very high probability.
Both of these methods are data oblivious.

For our main result, however, we desire an efficient method
for data-oblivious sorting in the external-memory model.
In this model, memory is divided between an internal memory
of size $M$ and an external memory (like a disk), which initially
stores an input of size $N$.
The external memory is divided into blocks of size $B$ and 
we can read or write any block in an atomic action called an I/O.
We say that
an external-memory sorting algorithm is \emph{data-oblivious} if the sequence
of I/Os that it performs is independent of the values of the data it is
processing.
Unfortunately, many of the existing efficient external-memory sorting 
algorithms are not oblivious in this sense
(e.g., see~\cite{av-iocsr-88,DBLP:reference/algo/Vitter08}).
Alternatively, existing oblivious external-memory 
sorting algorithms~\cite{cc-scnoa-06} are not
fully scalable to memories of size $O(n^\epsilon)$, or even smaller,
which is what we need here.
So we describe
an efficient external-memory oblivious sorting algorithm,
which is an
\emph{external-memory $k$-way modular mergesort} that
is an external-memory adaptation of Lee and Batcher's
generalization~\cite{lb-ammsn-95} of 
odd-even mergesort~\cite{b-snta-68}.
It uses a data-oblivious sequence of $O((N/B)\log^2_{M/B} (N/B))$ I/Os.

In the sections that follow, we describe how we achieve each of the results
outlined above.
We begin with our parallel algorithm for cuckoo hashing.
\fi

\section{MapReduce Cuckoo Hashing}
\ifFull
In this section, we
describe our MapReduce algorithm for setting up
a cuckoo hashing scheme.
We begin by reviewing the MapReduce paradigm.
\fi

\subsection{The MapReduce Paradigm}
In the MapReduce paradigm
(e.g., see~\cite{DBLP:conf/soda/FeldmanMSSS08,ksv-amcfm-10}),
a parallel computation is 
defined on a set of values, $\{x_1,x_2,\ldots,x_n\}$, and consists of a
series of \emph{map}, \emph{shuffle}, and \emph{reduce} steps:
\begin{itemize}
\item
A map step applies a \emph{mapping function}, $\mu$, 
to each value, $x_i$, to produce a key-value pair, $(k_i,v_i)$.
To allow for parallel execution,
the function,
$\mu(x_i)\rightarrow (k_i,v_i)$, must depend only on $x_i$.
\item
A shuffle step takes all the key-value pairs produced in the previous map
step, and produces a set of lists, $L_k=(k;v_{i_1},v_{i_2},\ldots)$,
where each such list consists of all the values, $v_{i_j}$, such
that $k_{i_j}=k$ for a key $k$ assigned in the map step.
\item
A reduce step applies a \emph{reduction function}, $\rho$, to each list,
$L_k=(k;v_{i_1},v_{i_2},\ldots)$, 
formed in the shuffle step, to produce a set of
values, $w_{1},w_{2},\ldots\,$.
The reduction function, $\rho$, is allowed to be defined sequentially 
on $L_k$, but should be independent of other lists $L_{k'}$ where $k'\not=k$.
\end{itemize}
Since we are using a MapReduce algorithm as a means to an end, rather than as an
end in itself, we allow
values produced at the end of a reduce step to be of two types:
\emph{final values}, which should be included in the output of
such an algorithm when it completes
and are not included in the set of values given as input to the next
map step,
and \emph{non-final values},
which are to be used as the input to
the next map step. 
Thus, for our purposes, a MapReduce computation continues performing map,
shuffle, and reduce steps until the last reduce step is executed, at which 
point we output all the final values produced over the course
of the algorithm.

\ifFull
In the MRC version of this
model~\cite{ksv-amcfm-10}, the computation of $\rho$ is restricted to use only
$O(n^{1-\epsilon})$ working storage, for a constant $\epsilon>0$.
\fi
In the MUD version of this model~\cite{DBLP:conf/soda/FeldmanMSSS08}, 
which we call the \emph{streaming-MapReduce} model,
the computation of $\rho$ is restricted to be a streaming algorithm
that uses only
$O(\log^c n)$ working storage, for a constant $c\ge 0$.
Given our interest in applications to data-oblivious
computations, we define a version that 
further restricts the computation of $\rho$ to be a 
streaming algorithm that uses only $O(1)$ working storage.
That is, we focus on a streaming-MapReduce model where $c=0$,
which we call the \emph{sparse-streaming-MapReduce} model.
In applying this paradigm to solve some problem, we assume we are initially
given a set of $n$ values as input, for which we then perform $t$ steps
of map-shuffle-reduce, as specified by a sparse-streaming-MapReduce
algorithm, $\cal A$.


\ifFull
The performance complexity of a 
MapReduce algorithm can be measured in several 
ways (e.g., see~\cite{DBLP:conf/soda/FeldmanMSSS08,ksv-amcfm-10}).
For instance, we can count the number of map-shuffle-reduce steps, $t$.
In our case, however, we are more interested in the performance issues
involved in data-oblivious simulation of sparse-streaming-MapReduce
algorithms.
\fi
Let us
define the \emph{message complexity} of a MapReduce to be the total
size of all the inputs and outputs to all the map, shuffle, and reduce steps
in a MapReduce algorithm.
That is, if we let $n_i$ denote the total size of the input and output sets for the 
$i$th phase of map, shuffle, and reduce steps,
then the message complexity of a MapReduce algorithm is $\sum_i n_i$.
\ifFull
This correlates to the notion of ``work'' in a 
traditional parallel algorithm~\cite{j-ipa-92}.
\fi

Suppose we have a function, $f(i,n)$, such that $n_i\le f(i,n)$, for
each phase $i$, over all possible executions of a MapReduce algorithm,
$\cal A$, that begins with an input of size $n$.
In this case, let us say that $f$ is a \emph{ceiling function}
for $\cal A$.
Such a function is useful for bounding the worst-case performance overhead 
for a
MapReduce computation\ifFull, especially if it is implemented in a distributed
system like Hadoop~\cite{w-hdg-09}\fi.

%

\subsection{A MapReduce Algorithm for Cuckoo Hashing}
Let us now describe an efficient algorithm for setting up a
cuckoo hashing scheme for a given set, $X=\{x_1,x_2,\ldots,x_n\}$, of items,
which we assume come from a universe that can be linearly ordered in some arbitrary fashion.  
Let $T_1$ and $T_2$ be the two tables that we 
are to use for cuckoo hashing and let $h_1$ and $h_2$ be two
candidate hash functions that we are intending to use as well.

%
%

\ifFull
Before we give the details of our algorithm, 
let us give the intuition behind it. 
\fi
For each $x_i$ in $X$, recall that $h_1(x_i)$ and $h_2(x_i)$ are the two
possible locations for $x_i$ in $T_1$ and $T_2$.
We can define
a bipartite graph, $G$, commonly called the \emph{cuckoo graph},
with vertex sets 
$U=\{h_1(x_i)\colon\ x_i\in X\}$
and
$W=\{h_2(x_i)\colon\ x_i\in X\}$
and edge set
$E=\{(h_1(x_i),h_2(x_i))\colon\ x_i\in X\}$.
That is, for
each edge $(u,v)$ in $E$, there is an associated value $x_i$ such that 
$(u,v)=(h_1(x_i),h_2(x_i))$, with parallel edges allowed.
Imagine for a moment that an oracle identifies for us each connected
component in $G$ and labels each node $v$ 
in $G$ with the smallest item belonging to an edge 
of $v$'s connected component.
Then we could initiate a breadth-first search from the node $u$ in $U$
such that $h_1(x_i)=u$ and $x_i$ is the smallest item in $u$'s connected
component, to define a BFS tree $T$ rooted at $u$.
For each non-root node $v$ in $T$, we can store the item $x_j$ at
$v$, where $x_j$ defines the edge from $v$ to its parent in $T$.
\ifFull
(Similar ideas have appeared in the different context of history-independent
cuckoo hashing \cite{naor-history}.)
\fi

If a connected component $C$ in $G$ is in fact a tree, then this
breadth-first scheme will accommodate all the items associated with
edges of $C$.  Otherwise, if $C$ contains some non-tree edges with
respect to its BFS tree, then we pick one such edge, $e$. All other
non-tree edges belong to items that are going to have to be stored in
the stash. For the one chosen non-tree edge, $e$, we assign $e$'s item
to one of $e$'s endvertices, $w$, and we perform a ``cuckoo'' action
along the path, $\pi$, from $w$ up to the root of its BFS tree, moving
each item on $\pi$ from its current node to the parent of this node on
$\pi$.  Therefore, we can completely accommodate all items associated
with unicyclic subgraphs or tree subgraphs for their connected
components. All other items are stored in the stash.  For cuckoo
graphs corresponding to hash tables with load less than $1/2$, with
high probability there are no components with two or more non-tree edges,
and the stash further increases the probability that, when such edges
exist, they can be handled.


Unfortunately, we don't have an oracle to initiate the above
algorithm. Instead, we essentially
perform the above algorithm in parallel, starting
from all nodes in $U$, assuming they are the root of a BFS tree. Whenever
we discover a node should belong to a different BFS tree, we simply ignore
all the work we did previously for this node and continue the computation for
the ``winning'' BFS tree (based on the smallest item in that connected
component).
\ifFull
In Figure~\ref{fig:bfs}, we describe an efficient MapReduce algorithm for
performing $n$ simultaneous breadth-first searches such that, any time two
searches ``collide,'' the search that started from a lower-numbered vertex is
the one that succeeds.
\else
Consider an efficient MapReduce algorithm for
performing $n$ simultaneous breadth-first searches such that, any time two
searches ``collide,'' the search that started from a lower-numbered vertex is
the one that succeeds.
\fi
We can easily convert this into an algorithm for cuckoo hashing by adding steps
that process non-tree edges in a BFS search. For the first such edge
we encounter, we
initiate a reverse cuckoo operation, to allocate items in the induced cycle.
For all
other non-tree edges, we allocate their associated items to the stash.

Intuitively,
the BFS initiated from the minimum-numbered vertex, $v$,
in a connected component
propagates out in a wave, bounces at the leaves of this BFS tree, returns
back to $v$ to confirm it as the root, and then propagates back down the BFS
tree to finalize all the members of this BFS tree.
Thus, in time proportional to the depth of this BFS tree
(which, in turn, is at most the size of this BFS tree), we will finalize
all the members
of this BFS tree. And once these vertices are finalized, we no longer need to
process them any more.
Moreover, this same argument applies to the modified BFS that performs the
cuckoo actions.
Therefore, we process each connected component in the cuckoo graph in a
number of iterations
that is, in the worst-case, equal to three times 
the size of each such component (since the waves of the BFS move
down-up-down, passing over each vertex three times).

To bound both the time for the parallel BFS algorithm 
to run and to bound its total work, we require bounds on the component
sizes that arise in the cuckoo graph.  Such bounds naturally appear in 
previous analyses of cuckoo hashing.  In particular, 
the following result is proven in \cite{kmw-chs-09}[Lemma 2.4].

\begin{lemma}
\label{lem:cuckoosize}
Let $v$ be any fixed vertex in the cuckoo graph and let $C_v$ be its
component.  Then there exists a constant $\beta \in (0,1)$ such
that for any integer $k \geq 0$, $$ \Pr(|C_v| \geq k) \leq \beta^k. $$
\end{lemma}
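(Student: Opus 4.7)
The plan is to bound $|C_v|$ by analyzing a BFS exploration of the component containing $v$ and coupling it with a subcritical branching process. Starting from $v$, I would process vertices one at a time, at each step revealing only the randomness needed: when processing a vertex $u$, expose all items $x_i \in X$ whose hash pair includes $u$ and has not been exposed previously. Each such item corresponds to an edge incident to $u$ and adds at most one new vertex to the BFS frontier, and $C_v$ is precisely the set of vertices eventually processed.

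Because $h_1$ and $h_2$ are modeled as random hash functions and the cuckoo graph is bipartite with $m$ cells per side, the number of items $x_i$ with $h_1(x_i) = u$ for a fixed $u$ on the $T_1$ side is $\mathrm{Binomial}(n,1/m)$, with mean $n/m = 1-\epsilon < 1$ (symmetrically for the $T_2$ side). Hence, the number of new frontier vertices produced when processing any single vertex is stochastically dominated, independently of previously exposed randomness, by a $\mathrm{Binomial}(n,1/m)$ variable. For the event $|C_v| \geq k$ to hold, the BFS must process at least $k-1$ vertices without the frontier emptying, which requires the total number of frontier vertices discovered during the first $k-1$ processing steps to be at least $k-1$. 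This total is dominated by a sum of $k-1$ independent $\mathrm{Binomial}(n,1/m)$ variables, namely $\mathrm{Binomial}((k-1)n, 1/m)$ with mean $(k-1)(1-\epsilon)$. Since $1-\epsilon < 1$, a standard Chernoff bound on the upper tail yields $\Pr(|C_v| \geq k) \leq \beta^{k}$ for some constant $\beta \in (0,1)$ depending only on $\epsilon$.

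The main point requiring care is setting up the stochastic domination cleanly. The conditional distribution of new items exposed at a given step is \emph{at most} $\mathrm{Binomial}(n,1/m)$ rather than exactly that, since previously exposed items are excluded and the available pool shrinks; this one-sided domination is all that is needed for the Chernoff argument. A secondary subtlety is that the bipartite structure causes the underlying branching process to alternate between offspring distributions on the two sides, but both sides have the same mean $1-\epsilon$, so subcriticality of the dominating process is preserved, and the same geometric tail bound applies uniformly.
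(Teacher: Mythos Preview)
Your proposal is correct and follows essentially the same approach as the paper's own proof: a BFS exploration of $C_v$ in which the number of new edges revealed at each processed vertex is stochastically dominated by an independent $\mathrm{Binomial}(n,1/m)$ variable, followed by a Chernoff bound on $\mathrm{Binomial}(kn,1/m)$ (the paper attributes this argument to Devroye and Morin). Your discussion of the one-sided domination and the bipartite alternation is, if anything, slightly more careful than the paper's sketch, but the method is the same.
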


More detailed results concerning the asymptotics of
the distribution of component sizes for cuckoo hash tables can 
be found in, for example \cite{drmota2008precise}, 
although the above result is sufficient to
prove linear message-complexity overhead.  

Lemma~\ref{lem:cuckoosize} immediately implies that the MapReduce 
BFS algorithm 
(and the extension to cuckoo hashing) takes $O(\log n)$
time to complete with high probability.
\ifFull
Note that a direct application of Lemma~\ref{lem:cuckoosize} gives
a with very high probability result if we allow $\omega(\log n)$ time to complete;  however, we do not need this result to hold with
very high probability for our oblivious simulation.
  
In addition, we have the following:
\fi
\begin{lemma}
\label{lem:linear}
The message complexity of the MapReduce BFS algorithm is $O(n)$ with 
very high probability.
\end{lemma}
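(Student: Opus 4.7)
The plan is to charge the messages to connected components of the cuckoo graph and then combine Lemma~\ref{lem:cuckoosize} with a concentration inequality. From the down--up--down wave argument preceding the lemma, the BFS (with the cuckoo-allocation extension) spends at most $O(|C|)$ MapReduce iterations on each connected component $C$, and in each such iteration the messages generated within $C$ are at most $O(|C|)$, counting both wave-propagation traffic for the winning BFS and the traffic of the BFSes initiated at non-minimum vertices of $C$ before they are killed upon contact. Summing over iterations yields at most $O(|C|^{2})$ messages per component, and hence
\[
\text{total messages}\;\le\;\sum_{C}O(|C|^{2})\;=\;O\!\left(\sum_{v}|C_v|\right),
\]
where $C_v$ denotes the component containing~$v$.

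Next I would pass to the expectation: Lemma~\ref{lem:cuckoosize} gives $\Pr(|C_v|\ge k)\le\beta^{k}$ for some $\beta\in(0,1)$, so
\[
\E[|C_v|]\;=\;\sum_{k\ge 1}\Pr(|C_v|\ge k)\;\le\;\frac{\beta}{1-\beta}\;=\;O(1),
\]
and therefore $\E[\sum_v|C_v|]=O(n)$ by linearity. Thus the claimed bound already holds in expectation, and it remains to promote it to a very-high-probability statement.

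For this last step I would apply McDiarmid's bounded-differences inequality to $Y:=\sum_v|C_v|$, viewed as a deterministic function of the $2n$ independent random hash values $h_1(x_i),h_2(x_i)$. Replacing one such value perturbs a single edge of the cuckoo graph, which can alter $Y=\sum_{C}|C|^{2}$ by at most $O(L^{2})$, where $L$ denotes the maximum component size. A union bound over Lemma~\ref{lem:cuckoosize} gives $L=O(\log n)$ with very high probability; on this event the bounded differences are $O(\log^{2}n)$, and McDiarmid's inequality then produces a deviation tail $\exp(-\Omega(n/\log^{4}n))$ for $|Y-\E[Y]|\ge\E[Y]$, which is $O(n^{-\alpha})$ for every constant $\alpha>0$.

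The main obstacle will be making this concentration step rigorous, because the $O(\log^{2}n)$ difference bound holds only on the typical event $\{L\le c\log n\}$ rather than deterministically. I would handle this either by applying McDiarmid conditional on that good event and then absorbing its complementary probability into the overall failure bound, or equivalently by replacing $Y$ with a Lipschitz truncation that coincides with $Y$ on the good event and admits unconditional $O(\log^{2}n)$ differences.
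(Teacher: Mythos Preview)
Your proposal follows essentially the same route as the paper's proof: bound the messages by $O\bigl(\sum_v |C_v|\bigr)$, use Lemma~\ref{lem:cuckoosize} to get $\E[\sum_v |C_v|]=O(n)$, and then apply a McDiarmid/Azuma-type bounded-differences inequality with a ``bad event'' correction to handle the fact that the Lipschitz bound only holds conditionally. The paper does exactly this, citing Theorem~3.7 of McDiarmid's concentration survey for the version of the inequality that absorbs the probability of the bad event additively.

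One quantitative point needs adjustment. You take the component-size threshold $L=c\log n$ for a fixed $c$, but then the bad event $\{L>c\log n\}$ only fails with polynomially small probability: the union bound over $2m$ vertices and Lemma~\ref{lem:cuckoosize} give $\Pr(L>c\log n)\le 2m\,\beta^{c\log n}=O(n^{1-c\log(1/\beta)})$, which is $O(n^{-\alpha})$ only for one particular $\alpha$, not for every $\alpha>0$. That contradicts the paper's definition of ``very high probability.'' The fix is to take $L=\omega(\log n)$; the paper uses $L=O(\log^2 n)$, so that the bad-event probability $\delta$ is superpolynomially small and the bounded differences become $O(\log^4 n)$. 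With that tweak your argument matches the paper's.
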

\begin{proof}
The message complexity is bounded by a constant times
$\sum_v |C_v|$, which in expectation is 
$$\E \left [\sum_v |C_v| \right ] =  
\sum_v \E [ |C_v| ]
\leq 2m \sum_{k \geq 0} \Pr(C_v \geq k)
\leq 2m \sum_{k \geq 0} \beta^k 
= O(m).$$


To prove a very high probability bound, we use a variant of Azuma's
inequality specific to this situation.  If all component sizes were
bounded by say $O(\log^2 n)$, then a change in any single edge in the cuckoo
graph could affect $\sum_v |C_v|$ by only $O(\log^4 n)$, and we could
directly apply Azuma's inequality to the Doob martingale obtained by
exposing the edges of the cuckoo graph one at a time.  Unfortunately,
all component sizes are $O(\log^2 n)$ only with very high probability.
However, standard results yield that one can simply add in the
probability of a ``bad event'' to a suitable tail bound, in this case
the bad event being that some component size is larger than $c_1 \log^2
n$ for some suitable constant $c_1$.  Specifically, we directly utilize Theorem 3.7
from \cite{McDiarmid-98-concentration}, which allows us to conclude that if the probability of a bad event is a superpolynomially small $\delta$, 
then 
$$\Pr \left ( \sum_v |C_v| \geq \sum_v \E [ |C_v|] + \lambda \right ) \leq e^{-(2\lambda^2)/(nc_2 \log^4 n)} + \delta,$$
where $c_2$ is again a suitably chosen constant.
Now choosing $\lambda = n^{2/3}$, for example, suffices.
\end{proof} 

\section{Simulating a MapReduce Algorithm Obliviously}
\ifFull
In this section, we describe efficient oblivious simulations for 
sparse-streaming-MapReduce
computations, which imply efficient oblivious constructions for cuckoo
hashing.

\subsection{A Reduction to Oblivious Sorting}
\fi

Our simulation is based on a reduction to oblivious sorting.

\begin{theorem}
\label{thm:mr-sim}
Suppose $\cal A$ is a sparse-streaming-MapReduce algorithm 
that runs in at most $t$ map-shuffle-reduce steps, and
suppose further that we have a ceiling function, $f$, for $\cal A$.
Then we can simulate $\cal A$ in a data-oblivious fashion in the RAM model 
in time $O(\sum_{i=1}^t \sort(f(i,n)))$, where
$\sort(n)$ is the time needed to sort $n$ items in a data-oblivious
fashion.
\end{theorem}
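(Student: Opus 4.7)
The plan is to simulate each of the $t$ map-shuffle-reduce phases of $\cal A$ independently in time $O(\sort(f(i,n)))$, so the total cost telescopes to the stated bound. Throughout the simulation I maintain the invariant that between phases the data lives in an array of length exactly $f(i,n)$ whose entries carry a ``real'' or ``dummy'' tag; the simulator's access pattern will depend only on array positions (and on $f$), never on tags or values, which is what is needed for obliviousness.

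The \emph{map} step is realized by a single linear scan of the input array: for each slot I compute $\mu(x)$ if the slot is real, or produce a dummy key-value pair carrying a sentinel key if not. This costs $O(f(i,n))$ and is oblivious. The \emph{shuffle} step is then realized directly by invoking the data-oblivious sorter on the $f(i,n)$ key-value pairs using the keys as sort order; equal-keyed pairs become contiguous and dummies land past the end. The cost is $\sort(f(i,n))$.

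The main obstacle is the \emph{reduce} step, because $\rho$ may emit a data-dependent number of outputs per group, and a naive ``append to the next free slot'' would leak the per-group counts through the write pattern. I plan to exploit the two hypotheses of the sparse-streaming-MapReduce model: (i) $\rho$ uses $O(1)$ working storage, so I can scan the sorted array while keeping $\rho$'s state for the current group in a single $O(1)$ register, detecting group boundaries obliviously by comparing adjacent keys; and (ii) the ceiling function bounds the total number of outputs emitted in phase $i$ by $f(i,n)$. I therefore extend the scan to length $2f(i,n)$ by appending $f(i,n)$ ``drain'' positions that feed only dummies into $\rho$ while continuing to pull from its output stream. At each of the $2f(i,n)$ positions I write exactly one symbol into a fixed output buffer: the next value produced by $\rho$ if one is available, or a dummy otherwise. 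Because both the number and the target positions of these writes are determined in advance from $f(i,n)$ alone, the step is oblivious; because the total real output is at most $f(i,n)$, no real output is dropped.

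To restore the invariant for phase $i+1$, I run the data-oblivious sorter once more to compact real outputs to the front of the buffer (making dummies sort last) and pad/truncate to length $f(i+1,n)$. Each phase thus costs $O(f(i,n) + \sort(f(i,n))) = O(\sort(f(i,n)))$, since $\sort(m) = \Omega(m)$. Summing over $i=1,\ldots,t$ gives the advertised bound of $O(\sum_{i=1}^{t} \sort(f(i,n)))$.
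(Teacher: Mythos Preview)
Your overall framework---linear scan for map, oblivious sort for shuffle, linear scan for reduce, and a second oblivious sort for compaction---matches the paper's, and the cost accounting is right. The gap is in your reduce-step simulation. You propose to write exactly one output symbol per scan position and to ``pull from $\rho$'s output stream'' as you go, relying on the ceiling function to bound the \emph{total} number of real outputs by $f(i,n)$. But the ceiling function says nothing about how those outputs are distributed across the inputs. If $\rho$ emits several values after processing a single input item, your one-write-per-position scheme forces the simulator to hold the excess outputs until later positions free up. That backlog lives in the simulator's private registers, and it is not bounded by a constant: if each of the first $m$ real inputs causes $\rho$ to emit $d>1$ outputs, the pending queue grows by $d{-}1$ per step and reaches $(d{-}1)m$, which can be $\Theta(f(i,n))$. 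You never explain where this queue is stored or how it is accessed obliviously; with only $O(1)$ registers the scheme as written cannot be carried out.

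The paper closes this gap by extracting a different consequence of the $O(1)$-working-storage hypothesis on $\rho$: since $\rho$'s state between inputs is constant-sized, the number of output values it can emit after consuming any single input is itself bounded a priori by a fixed constant $d$. The simulator therefore writes exactly $d$ output slots for every input item, padding with dummies; no cross-input buffering is needed, the output array has size $d\cdot f(i,n)=O(f(i,n))$, and the second oblivious sort compacts it back down to $f(i,n)$ entries. Your drain phase and the appeal to the global ceiling bound become unnecessary once this per-item bound is in hand.
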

\begin{proof}
Let us consider how we can simulate the map, shuffle, and reduce steps in
phase $i$ of algorithm $\cal A$ in a data-oblivious way.
We assume inductively that we
store the input values for phase $i$ in an array, $X_i$.
Let us also assume inductively that $X_i$ can store 
values that were created as final values the step $i-1$.
A single scan through the first $f(i,n)$ values of 
$X_i$, applying the map function, $\mu$, as we go, produces all the
key-value pairs for the map step in phase $i$ (where we output a dummy value
for each input value that is final or is itself a dummy value). 
We can store each computed 
value, one by one, in an oblivious fashion using an output array $Y$.
We then obliviously sort $Y$ by keys to bring
together all key-value pairs with the same
key as consecutive cells in $Y$ (with dummy values taken to be larger than all
real keys).
This takes time $O(\sort(f(i,n))$. 
Let us then do a scan of the first $f(i,n)$ cells 
in $Y$ to simulate the reduce step.
As we consider each item $z$ in $Y$, we can 
keep a constant number of state variables as registers in our RAM, 
which collectively maintain the key value, $k$,
we are considering, the internal state of registers needed to compute $\rho$
on $z$, and the output values produced by $\rho$ on $z$.
This size bound is due to the fact that $\cal A$ is a
sparse-streaming-MapReduce algorithm.
Since the total size of this state is constant, the total number of output
values that could possibly be produced by $\rho$ on an input $z$ can be
determined a priori and bounded by a constant, $d$.
So, for each value $z$ in $Y$, we write $d$ values to an output array $Z$,
according to the function $\rho$, padding with dummy values if needed.
The total size of $Z$ is therefore $O(d\, f(i,n))$, which is $O(f(i,n))$.
Still, we cannot simply make $Z$ the input for the next map-shuffle-reduce
step at this point, since we need the input array to have at most $f(i,n)$
values.
Otherwise,
we would have an input array that is a factor of $d$ too large for
the next phase of the algorithm $\cal A$.
So we perform a data-oblivious sorting of $Z$, with dummy values taken
to be larger than all real values, and then we copy the first $f(i,n)$
values of $Z$ to $X_{i+1}$ to serve as the input 
array for the next step to continue the
inductive argument.
The total time needed to perform step $i$ is $O(\sort(f(i,n))$.
When we have completed processing of step $t$, we concatenate all the $X_i$'s
together, flagging all the final values as being the output values for the
algorithm $\cal A$, which can be done in a single data-oblivious scan.
Therefore, we can simulate each step of $\cal A$ in a data-oblivious fashion
and produce the output from $\cal A$, as well, at the end.
Since we do two sorts on arrays of size $O(f(i,n))$ in each 
map-shuffle-reduce step, $i$, of $\cal A$, 
this simulation takes time
\ifFull
\[
O(\sum_{i=1}^t \sort(f(i,n))).
\]
\else
$O(\sum_{i=1}^t \sort(f(i,n)))$.
\fi
\end{proof}

\ifFull
In Appendix \ref{app:sort}, we show that by
combining this result with Lemma~\ref{lem:linear} we get the
following:
\else
We can show that by
combining this result with Lemma~\ref{lem:linear} we get the
following:
\fi

\begin{theorem}
\label{thm:mapreduce}
Given a set of $n$ distinct items and corresponding hash values,
there is a data-oblivious algorithm for constructing
a two-table cuckoo hashing scheme of size $O(n)$ with a stash 
of size $s$, whenever this stash size is sufficient,
in $O(\sort(n+s))$ time.
\end{theorem}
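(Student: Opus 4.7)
The plan is to instantiate Theorem~\ref{thm:mr-sim} with the parallel cuckoo hashing algorithm sketched in Section~3.2, using Lemma~\ref{lem:cuckoosize} and Lemma~\ref{lem:linear} to supply the necessary ceiling function.

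First, I would verify that the BFS-plus-cuckoo-eviction algorithm of Section~3.2 really is a sparse-streaming-MapReduce computation: in each round, the map step emits only a constant number of messages along each incident cuckoo-graph edge, and the reduce step at each vertex needs to track only constantly many state variables (the smallest root label seen so far, the current BFS depth, a finalized/not flag, and a flag indicating whether a non-tree edge has already been used for the reverse cuckoo along the induced cycle). The post-BFS ``bounce-back'' that finalizes items, and the routing of additional non-tree edges into the stash, can be encoded as a constant number of additional rounds with the same constant-state reduce functions.

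Next, I would establish the round and message budgets. Lemma~\ref{lem:cuckoosize} gives $\Pr(|C_v| \geq k) \leq \beta^k$, so the expected number of vertices still active in round $i$ of the parallel BFS is at most $n\beta^i$, yielding a geometric decay of per-round message size; the same lemma also implies that with very high probability every component has size $O(\log^2 n)$, which bounds the number of rounds by $t = O(\log n)$. A concentration argument in the spirit of the one used inside the proof of Lemma~\ref{lem:linear} then lets me pick deterministic ceilings
\[
f(i,n) \;=\; c_1 \cdot \max\{\,n\beta^i,\; \log^{c_2} n\,\}
\]
for suitable constants $c_1,c_2$, with only negligible probability of violation. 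These ceilings satisfy $\sum_{i=1}^{t} f(i,n) = O(n + s)$ once we also include the $s$ items that the construction may route to the stash. Plugging these into Theorem~\ref{thm:mr-sim} gives total simulation time $O\bigl(\sum_{i=1}^{t} \sort(f(i,n))\bigr)$, and since $\sort(k) = k \cdot \mathrm{polylog}(k)$ satisfies $\sum_i \sort(k_i) \leq \sort(\sum_i k_i)$ whenever each $k_i \leq \sum_i k_i$, the sum collapses to $O(\sort(n+s))$.

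The main obstacle will be reconciling Theorem~\ref{thm:mr-sim}'s requirement that $f(i,n)$ be a deterministic ceiling with the fact that Lemma~\ref{lem:linear} and the round-count bound hold only with very high probability. I would handle this by committing to the $f(i,n)$ above in advance, padding each round's input and output with dummy messages up to exactly $f(i,n)$ (preserving obliviousness, since the access pattern then depends only on $n$ and on the preselected ceilings), and treating any execution whose true message trace would overflow a ceiling as an abort. Because the ceilings were chosen so that such overflows occur with only negligible probability, these aborts are absorbed into the failure probability of the final theorem, and the stated $O(\sort(n+s))$ running time holds with very high probability as required.
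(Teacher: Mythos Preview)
Your plan matches the paper's: instantiate Theorem~\ref{thm:mr-sim} on the parallel BFS/cuckoo algorithm, with Lemma~\ref{lem:linear} supplying the message budget. In fact you are considerably more careful than the paper about the ceiling function---the paper's proof simply asserts that ``combining Theorem~\ref{thm:mr-sim} and Lemma~\ref{lem:linear}'' yields an $O(\sort(n))$ simulation, without spelling out the per-round ceilings or the superadditivity argument you give. Your padding-and-abort treatment of the probabilistic ceilings is also the right way to reconcile Theorem~\ref{thm:mr-sim}'s deterministic hypothesis with the high-probability bounds. (A minor quibble: your justification for the ``$+\,s$'' is not quite right---the stash items are among the $n$ inputs and do not increase the MapReduce message count; the $+\,s$ is there only to cover writing out all $s$ stash slots at the end, and is immaterial whenever $s=O(n)$.)

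Where your proposal falls short of the paper is the finishing step. The MapReduce simulation does not produce a cuckoo table; its output is an unordered set of pairs $(i,x)$ meaning ``cell $i$ should hold item $x$'' (with a sentinel index for stash items). To obtain the actual data structure promised by the theorem---two indexed arrays plus a stash that support constant-time lookup---you must obliviously route each $x$ to position $i$. The paper devotes most of its proof to this: it creates placeholder tuples $(i,T,0)$ for every table slot, obliviously sorts the union of placeholders and output pairs by index, does a linear scan to copy each $x$ into its adjacent placeholder, obliviously sorts again to separate the filled placeholders from the rest, and finally writes them out in index order (handling the stash analogously). This costs two more oblivious sorts of size $O(n+s)$, which is absorbed into the $O(\sort(n+s))$ bound, but without it you have not actually ``constructed a two-table cuckoo hashing scheme.'' Adding this routing step would complete your argument.
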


\ifFull
Recall that for the constant-memory case, $\sort(n)$ is $O(n\log n)$.
\fi

\subsection{External-Memory Data-Oblivious Sorting}
In this section, we give our efficient external-memory oblivious sorting
algorithm.
Recall that in this model memory is divided between an internal memory
of size $M$ and an external memory (like a disk), which initially
stores an input of size $N$, and that
the external memory is divided into blocks of size $B$, for which
we can read or write any block in an atomic action called an I/O.
In this context,
we say that
an external-memory sorting algorithm is \emph{data-oblivious} if the sequence
of I/Os that it performs is independent of the values of the data it is
processing.
So suppose we are given an unsorted array $A$ of $N$ comparable items
stored in external memory.
If $N\le M$, then we copy $A$ into our internal memory, 
sort it, and copy it back to disk.
Otherwise, we divide $A$ into $k=\lceil (M/B)^{1/3}\rceil$
subarrays of size $N/k$ and recursively sort each subarray.
Thus, the remaining task is to merge 
these subarrays into a single sorted array.

Let us therefore focus on the task of merging $k$
sorted arrays of size $n=N/k$ each.
If $nk\le M$, then we copy all the lists into internal memory, merge
them, and copy them back to disk.
Otherwise,
let $A[i,j]$ denote the $j$th element in the $i$th array. 
We form a set of $m$ new subproblems, where the $p$th subproblem involves
merging the $k$ sorted subarrays defined by $A[i,j]$
elements such that $j\bmod m = p$, for $m=\lceil (M/B)^{1/3}\rceil$.
We form these subproblems by processing each input subarray and filling in the 
portions of the output subarrays from the input, sending full blocks to disk
when they fill up (which will happen at deterministic moments independent of
data values), all of which uses $O(N/B)$ I/Os.
Then we recursively solve all the subproblems.
Let $D[i,j]$ denote the $j$th element in the output of the 
$i$th subproblem.
That is, we can view $D$ as a two-dimensional array, with each row
corresponding to the solution to a recursive merge.

\begin{lemma}
\label{lem:batcher}
Each row and column of $D$ is in sorted order 
and all the elements in column $j$ are
less than or equal to every element in column $j+k$.
\end{lemma}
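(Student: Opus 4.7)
The plan is to prove all three assertions by the classical rank-counting (0-1) argument adapted to this $m$-way modular merge. Assume ties are broken by original position so that all items are effectively distinct, and for any threshold $x$ and input subarray $i$ let $r_i(x)$ denote the number of items in the $i$-th subarray that are at most $x$. Because each input subarray is already sorted, those $r_i(x)$ items occupy positions $0,1,\ldots,r_i(x)-1$, so the number of them routed to subproblem $p$ equals
\[
f_p(r) \;:=\; \bigl|\{\,j \in \{0,\ldots,r-1\} : j \equiv p \pmod m\,\}\bigr|
\]
evaluated at $r = r_i(x)$. Writing $S_p(x) = \sum_{i=1}^{k} f_p(r_i(x))$ for the total number of items $\leq x$ fed into subproblem $p$, and using the inductive hypothesis that each subproblem is merged correctly by the recursive call, the entry $D[p,j]$ is the $(j+1)$-st smallest element of subproblem $p$'s input; equivalently, $D[p,j] \leq x$ iff $S_p(x) \geq j+1$.

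Next I would record two elementary properties of $f_p$ that drive everything: for every $r \geq 0$, (i) $f_0(r) \geq f_1(r) \geq \cdots \geq f_{m-1}(r)$, and (ii) $f_0(r) - f_{m-1}(r) \in \{0,1\}$. Summed over the $k$ input subarrays these yield, respectively, $S_p(x) \geq S_{p'}(x)$ whenever $p \leq p'$, and $|S_p(x) - S_{p'}(x)| \leq k$ for all pairs $p,p'$.

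From these inequalities the three claims follow almost mechanically. Row sortedness is immediate because each row is, by induction, the sorted output of one subproblem. For column sortedness, set $x := D[p+1,j]$; then $S_{p+1}(x) \geq j+1$, so by (i) also $S_p(x) \geq j+1$, giving $D[p,j] \leq D[p+1,j]$. For the cross-column bound, set $x := D[p',j+k]$; then $S_{p'}(x) \geq j+k+1$, and by (ii) we get $S_p(x) \geq S_{p'}(x) - k \geq j+1$, so $D[p,j] \leq D[p',j+k]$ for every pair of rows $p,p'$.

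The only mild subtlety is handling duplicate keys so that the characterisation ``$D[p,j]$ has rank exactly $j+1$'' has no off-by-one slippage; I would dispose of this at the outset by attaching a unique secondary key (the item's original position in $A$) and treating the order lexicographically. With that in place, the whole argument collapses to the two structural inequalities on $f_p$ above, and no further calculation is needed.
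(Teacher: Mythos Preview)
Your argument is correct. The rank-counting characterisation $D[p,j]\le x \iff S_p(x)\ge j+1$, together with the two structural facts about the round-robin distribution function $f_p$, yields all three claims cleanly; the tie-breaking device you introduce is the right way to make the ``iff'' airtight.

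By way of comparison: the paper does not actually prove this lemma at all---it simply invokes Theorem~1 of Lee and Batcher (their generalisation of odd-even mergesort) as a black box. Your proof is therefore strictly more informative than what appears here, and is in fact essentially the standard 0-1/rank argument that underlies that cited theorem. What your write-up buys is self-containment and transparency about \emph{why} the bound is exactly $k$ (one unit of imbalance per input subarray, summed over $k$ subarrays); what the paper's citation buys is brevity, at the cost of sending the reader to another source. Either is acceptable, but if you are writing this up for a venue where the Lee--Batcher paper may not be familiar, your direct argument is the better choice.
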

\begin{proof}
The lemma follows from Theorem~1 of
Lee and Batcher~\cite{lb-ammsn-95}.
\end{proof}

To complete
the $k$-way merge, then, we imagine that we slide an $m\times k$ rectangle
across $D$, from left to right.
We begin by reading into internal memory the first $2k$ columns of $D$.
Next, we sort this set of elements in internal memory and we output the
ordered list of the $km$ smallest elements (holding back a small buffer of
elements if we don't fill up the last block).
Then we read in the next $k$ columns of $D$ (possibly reading in $k$ 
additional blocks for columns beyond this, depending on the block boundaries), 
and repeat the sorting of the items in internal 
memory and outputting the smallest $km$ elements in order.
At any point in this algorithm, we may need to have up to
$2km+(m+2)B$ elements in internal memory, which, under a reasonable tall
cache assumption (say $M>3B^4$), will indeed fit in internal memory.
We continue in this way until we process all the elements in $D$.
Note that, since we process the items in $D$ from left to right
in a block fashion, for all possible data values, the algorithm is
data-oblivious with respect to I/Os.

Consider the correctness of this method.
Let $D_1,D_2,\ldots,D_l$ denote the subarrays of $D$ of size $m\times k$
used in our algorithm.
By a slight abuse of notation, 
we have that $D_1\le D_{3}$,
by Lemma~\ref{lem:batcher}.
Thus, the smallest $mk$ items in $D_1\cup D_2$ are less than or equal 
to the items in $D_{3}$.
Likewise, these $mk$ items are obviously less than the largest $mk$ items in
$D_1\cup D_2$.
Therefore, the first $mk$ items output by our algorithm are the smallest
$mk$ items in $D$.
Inductively, then, we can now ignore these smallest $mk$ items and
repeat this argument with the remaining items in $D$.
\ifFull\else 
Thus, we have the following.
\fi

\ifFull
Let us next consider the I/O complexity of this algorithm.
The $k$-way merge process is a recursive procedure of depth 
$O(\log_{M/B} (N/B))$, such that each level of the recursion is processed
using $O(N/B)$ (data-oblivious) I/Os.
In addition, there are
$O(\log_{M/B} (N/B))$ levels of recursion in the sorting algorithm
itself, for which this $k$-way process performs the merging step in the
divide-and-conquer. Thus, we have the following.
\fi

\begin{theorem}
\label{thm:io}
Given an array $A$ of size $N$
comparable items, we can sort $A$
with a data-oblivious external-memory algorithm that uses
$O((N/B)\log^2_{M/B} (N/B))$ I/Os, under a tall-cache assumption
($M>3B^4$).
\end{theorem}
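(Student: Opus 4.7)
The plan is to derive the stated I/O bound by unrolling two nested recurrences---one for the outer recursive sort and one for the $k$-way merge---and then verifying that every I/O issued depends only on $N$, $M$, and $B$, so obliviousness is automatic. Lemma~\ref{lem:batcher} has already given us correctness of the sliding-window step, so nothing about the ordering of the output needs to be reestablished.

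First I would analyze the $k$-way merge on an instance of total size $N$. Either $N\le M$, in which case a single pass of $O(N/B)$ I/Os suffices, or the algorithm forms $m=\lceil (M/B)^{1/3}\rceil$ subproblems each of total size $N/m$, recursively solves them, and then performs the sliding-window combining pass. The distribution into subproblems and the sliding-window pass each touch every element $O(1)$ times using full-block I/Os at predetermined positions, so each contributes $O(N/B)$ I/Os. This yields
\[
M_{\text{merge}}(N)\;=\; m\cdot M_{\text{merge}}(N/m) \;+\; O(N/B),
\]
whose recursion tree has depth $O(\log_m(N/M))$ and total work $O((N/B)\log_m(N/M))$. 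Since $m=\Theta((M/B)^{1/3})$, this simplifies to $O((N/B)\log_{M/B}(N/B))$.

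Next I would unroll the outer sort. Recursively sorting $k=\lceil(M/B)^{1/3}\rceil$ subarrays of size $N/k$ and then invoking the merge gives
\[
S(N)\;=\; k\cdot S(N/k)\;+\;M_{\text{merge}}(N).
\]
The sort's recursion has depth $O(\log_k(N/M)) = O(\log_{M/B}(N/B))$, and since the subproblem sizes at each level of this recursion sum to $N$, the merge cost charged at each level is $O((N/B)\log_{M/B}(N/B))$. Summing across the $O(\log_{M/B}(N/B))$ levels gives $S(N) = O((N/B)\log^2_{M/B}(N/B))$, matching the claim.

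Finally I would confirm obliviousness and the role of the tall-cache assumption. Every I/O in the distribution step, the recursive calls, and the sliding-window pass occurs at a position determined solely by $N$, $M$, $B$, and the recursion depth; in particular the sliding window advances by exactly $k$ columns per iteration regardless of key values, and blocks are flushed at deterministic moments. The assumption $M>3B^4$ is invoked precisely to ensure that the at most $2km+(m+2)B$ elements simultaneously resident during the sliding step, together with a constant number of output blocks, fit in internal memory. I do not expect a real obstacle here: the hard correctness fact is Lemma~\ref{lem:batcher}, and the rest is routine bookkeeping of the two recurrences and of the buffer sizes against the tall-cache inequality.
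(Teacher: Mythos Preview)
Your proposal is correct and matches the paper's own argument: the paper likewise observes that the recursive $k$-way merge has depth $O(\log_{M/B}(N/B))$ with $O(N/B)$ data-oblivious I/Os per level, and that the outer sort contributes another $O(\log_{M/B}(N/B))$ levels, yielding the product bound. You have simply made the two recurrences and the tall-cache buffer check more explicit than the paper does, but the decomposition, the key lemma invoked, and the obliviousness justification are the same.
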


\ifFull
Thus, we can use this algorithm 
and achieve the stated number of I/Os as an instance of the
$\sort(N)$ function.
We also have the following.
\fi

\begin{theorem}
\label{thm:mapreduce2}
Given a set of $n$ distinct items and corresponding hash values,
there is a data-oblivious algorithm for constructing
a two-table cuckoo hashing scheme of size $O(n)$ with a stash
of size $s = O(\log n)$ whenever this stash size is sufficient,
using a private memory of size $O(n^{1/r})$, for a given fixed
constant $r > 1$, in $O(n+s)$ time.
\end{theorem}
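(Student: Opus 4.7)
The plan is to combine three ingredients now at our disposal: the MapReduce cuckoo-hashing algorithm sketched in Section~3.2, the oblivious MapReduce simulation of Theorem~\ref{thm:mr-sim}, and the external-memory oblivious sort of Theorem~\ref{thm:io}, with the sorting parameters tuned so that a single sort runs in linear time when the internal memory has size $\Theta(n^{1/r})$.

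First, I would invoke the MapReduce cuckoo-hashing algorithm constructed just before Lemma~\ref{lem:cuckoosize}, allotting a stash of size $s$ to absorb all items attached to ``extra'' non-tree edges. Provided $s$ is sufficient, Lemma~\ref{lem:linear} (augmented by an additive $O(s)$ term to route the stashed items) guarantees that the total message complexity, summed across all $t=O(\log n)$ rounds, is $O(n+s)$ with very high probability. Let $f(i,n)$ be a per-round ceiling function for this algorithm whose sum $\sum_{i=1}^{t} f(i,n)$ matches the realized total message size.

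Second, I would feed the algorithm into Theorem~\ref{thm:mr-sim}, taking $\sort$ to be the external-memory oblivious sort of Theorem~\ref{thm:io} with internal memory $M=\Theta(n^{1/r})$ and block size $B$ a fixed constant satisfying the tall-cache hypothesis $M>3B^4$ (trivially true for all sufficiently large $n$). Then for any $N\le n$ we have $\log_{M/B}(N/B)=O(r)=O(1)$, and hence
\[
\sort(N) \;=\; O\bigl((N/B)\log^2_{M/B}(N/B)\bigr) \;=\; O(N).
\]
The simulation cost is therefore
\[
\sum_{i=1}^{t}\sort(f(i,n)) \;=\; O\!\left(\sum_{i=1}^{t} f(i,n)\right) \;=\; O(n+s),
\]
with very high probability, which is the claimed bound.

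The main technical obstacle is that Theorem~\ref{thm:mr-sim} formally wants a \emph{deterministic} worst-case ceiling $f$, whereas Lemma~\ref{lem:linear} bounds the message complexity only with very high probability; padding every phase deterministically to $O(n)$ would cost an extraneous $\log n$ factor. I would resolve this by observing that the random choices driving the per-phase message sizes come entirely from the random oracles $h_1,h_2$, whose joint distribution is independent of Alice's data, so padding each phase to its realized size reveals nothing data-dependent. On the negligible-probability event that the realized total exceeds $O(n+s)$, we simply invoke the theorem's ``whenever this stash size is sufficient'' escape clause and declare failure, which is consistent with the very-high-probability guarantee.
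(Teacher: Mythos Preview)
Your proposal is correct and follows the same route as the paper, whose one-line proof invokes Theorems~\ref{thm:mapreduce} and~\ref{thm:io} with $N=n+s$, $B=1$, $M\in O(n^{1/r})$; unpacked, that is exactly your chain Lemma~\ref{lem:linear} $\to$ Theorem~\ref{thm:mr-sim} $\to$ Theorem~\ref{thm:io}, and your handling of the ceiling-function subtlety is in fact more explicit than the paper's own argument. The only step you omit is the post-processing inside Theorem~\ref{thm:mapreduce} (a couple of additional oblivious sorts and scans, costing $O(\sort(n+s))=O(n+s)$) that converts the MapReduce output of $(i,x)$ pairs into an actual contiguous table plus stash array.
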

\begin{proof}
Combine Theorems~\ref{thm:mapreduce}
and~\ref{thm:io},
with $N=n+s$, $B=1$, and $M\in O(n^{1/r})$.
\end{proof}

\section{Oblivious RAM Simulations}


Our data-oblivious simulation of a non-oblivious algorithm on a RAM
follows the general approach of
Goldreich and Ostrovsky~\cite{go-spsor-96}, but differs from it in
some important ways, most particularly in our use of cuckoo hashing.
We assume throughout that Alice encrypts the data she
outsources to Bob using
a probabilistic encryption scheme, 
so that multiple encryptions of the same value are extremely
likely to be different. Thus, each
time she stores an encrypted value, there is no way 
for Bob to correlate this value to other values or previous values.
So the only remaining information that needs to be
protected is the sequence of accesses that Alice makes to her data.


Our description simultaneously covers two separate cases:
the constant-sized private memory case with very high probability 
amortized time bounds,
and the case of private memory of size $O(n^{1/r})$ for some constant $r > 1$.
The essential description is the same for these settings, 
with slight differences
in how the hash tables are structured as described below. 

We store the $n$ data items in a hierarchy
of hash tables, $H_k$, $H_{k+1}$, $\ldots$, $H_L$, where 
$k$ is an initial starting point for our hierarchy and
$L=\log n$.
Each table, $H_i$, has capacity for $2^{i}$ items, which are distributed
between ``real'' items, which correspond to memory cells of the RAM, plus 
``dummy'' items, which are added for the sake of obliviousness to
make the number of items stored in $H_i$ be the appropriate value. 
The starting table, $H_k$, is simply an array that we access
exhaustively with each RAM memory read or write.
The lower-level tables, $H_{k+1}$ to $H_l$,
for $l$ determined in the analysis, are standard hash tables with $H_i$
having $2^{i+1}$ buckets of size $O(\log n)$, whereas higher-level tables, 
$H_{l+1}$ to $H_L$, are cuckoo hash tables, with $H_i$ having $(1+\epsilon)2^{i+2}$
cells and a stash of size $s$, where $s$ is determined in the analysis
and $\epsilon>0$ is a constant.
The sizes of the hash tables in this hierarchy increase geometrically; hence
the total size of all the hash tables is proportional to the size of $H_L$,
which is $O(n)$.
Our two settings will differ in the starting points for the various types
of hash tables in the hierarchy as well as the size of the stash associated
with the hash tables.


For each $H_i$ with $i<L$ we keep a count, $d_i$,
of the number of times that $H_i$ has been accessed since 
first being constructed as an 
``empty'' hash table containing $2^i$ dummy values,
numbered consecutively from $-1$ to $-2^{i}$.
For convenience, in what follows, let us think of each hash table
$H_i$ with $i>l$ as being a standard cuckoo hash table,
with a stash of size $s = \Theta(\log n)$ chosen for the sake of a desired
superpolynomially-small error probability.
Initially, every $H_i$ is an empty cuckoo hash table,
except for $H_L$, which contains all $n=2^L$
initial values plus $2^L$ dummy values.

We note that there is, unfortunately, some subtlety in the geometric
construction of hash tables in the setting of constant-sized private
memory with very high probability bounds.  A problem arises in that for small hash tables, of size say
polylogarithmic in $n$, it is not clear that appropriate very high
probability bounds, which required failures to occur with probability
inverse superpolynomial in $n$, hold with logarithmic sized stashes.  Such
results do not follow from previous work
\cite{kmw-chs-09}, which focused on constant-sized stashes.  However,
to keep the simulation time small, we cannot have larger than a
logarithmic-sized stash (if we are searching it exhaustively), and we
require small initial levels to keep the simulation time small.  


\ifFull
In Appendix~\ref{app:stash}, we 
\else
We can
\fi
show that we can cope with the problem
by extending results from \cite{kmw-chs-09} that logarithmic sized
stashes are sufficient to obtain the necessary probability bounds
for hash tables of size that are polylogarithmic in $n$.  In order to
start our hierarchy with small, logarithmic-sized hash tables, we 
simply use standard hash tables as described above for levels $k+1$
to $l=O(\log\log n)$ and use cuckoo hash tables, each with a stash of
size $s=O(\log n)$, for levels $l+1$ to $L$.  



\paragraph{Access Phase.}
When we wish to make an access for a data cell at index $x$, 
for a read or write, we first look in $H_k$ exhaustively to see if it
contains an item with key $x$.
Then, we initialize a flag, ``found,'' to \textbf{false} iff we have
not found $x$ yet.
We continue by performing an access\footnote{This access is actually
  two accesses if the table is a cuckoo hash table.}
to each hash table $H_{k+1}$ to
$H_L$, which is either to $x$ or to a dummy value (if we have already
found $x$).
\ifFull
We give the details of this search in Figure~\ref{fig:access}.
\fi

Our privacy guarantee depends on us 
never repeating a search. That is, we never perform a lookup for the same index,
$x$ or $d$, in the same table, that we have in the past.
Thus, after we have performed the above lookup for $x$,
we add $x$ to the table $H_k$, possibly with a new data value if
the access action we wanted to perform was a write. 

\paragraph{Rebuild Phase.}
With every table $H_i$, we associate a \emph{potential}, $p_i$.
Initially, every table has zero potential.
When we add an element to $H_k$, we increment its potential.
When a table $H_i$ has its potential, $p_i$, reach $2^i$, we 
reset $p_i=0$ and empty the 
unused values in $H_i$ into $H_{i+1}$ and add $2^i$ to $p_{i+1}$.
There are at most $2^i$ such unused values, so we pad this set with
dummy values to make it be of size exactly $2^i$; these dummy values are
included for the sake of obliviousness and can be ignored after
they are added to $H_{i+1}$.
Of course, this 
could cause a cascade of emptyings in some cases, which is fine.

Once we have performed all necessary emptyings,
then for any $j<L$, $\sum_{i=1}^j p_i$ is equal to the
number of accesses made to $H_j$ since it was last emptied.
Thus, we rehash each $H_i$ after it has been accessed $2^i$ times.
Moreover, 
we don't need to explicitly store $p_i$ with its associated hash table,
$H_i$, as $d_i$ can be used to infer the value of $p_i$.

The first time we empty $H_i$ into
an empty $H_{i+1}$,
there must have been exactly $2^i$ accesses made to $H_{i+1}$ since it
was created.
Moreover, the first emptying of $H_i$ into
$H_{i+1}$ involves the addition of $2^i$ values to $H_{i+1}$, some of which
may be dummy values.

When we empty $H_i$ into $H_{i+1}$ for the second time it will actually be
time to empty $H_{i+1}$ into $H_{i+2}$, as $H_{i+1}$ would have been accessed
$2^{i+1}$ times by this point---so we can simply union 
the current (possibly padded) contents of
$H_i$ and $H_{i+1}$ together
to empty both of them into $H_{i+2}$ (possibly with further
cascaded emptyings). 
Since the sizes of hash tables in our hierarchy 
increase geometrically, the size of
our final rehashing problem will always be proportional to the size of 
the final hash table that we want to construct.
Every $n=2^L$ accesses we reconstruct the entire hierarchy, placing 
all the current values into $H_L$.
Thus, the schedule of table emptyings follows a 
data-oblivious pattern depending only on $n$.

\paragraph{Correctness and Analysis.}
To the adversary, Bob, each lookup in a hash table, $H_i$,
is to one random location (if the table is a standard hash table)
or to two random locations 
and the $s$ elements in the stash
(if the table is a cuckoo hash table), which can be 
\begin{enumerate}
\item a search for a real item, $x$, that is not in $H_i$,
\item a search for a real item, $x$, that is in $H_i$,
\item a search for a dummy item, $d_i$.
\end{enumerate}
Moreover, as we search through the levels from $k$ to $L$ we go
through these cases in this order (although for any access, we might
not ever enter case~1 or case~3, depending on when we find $x$).
In addition, if we search for $x$ and don't find it
in $H_i$, we will eventually 
find $x$ in some $H_j$ for $j>i$ and then insert $x$ in $H_k$; hence, 
if ever after this point in time we perform a future
search for $x$, it will be found prior to $H_i$.
In other words,
we will never repeat a search for $x$ in a table $H_i$.
Moreover, we continue performing dummy lookups in tables $H_j$,
for $j>i$, even after we
have found the item for cell $x$ in $H_i$, which are to random locations 
based on a value of $d_i$ that is also not repeated.
Thus, the accesses we make to any
table $H_i$ are to locations that are chosen independently at random.
In addition, so long as our accesses don't violate the possibility of
our tables being used in a valid cuckoo hashing scheme (which our
scheme guarantees with very high probability\ifFull,
  which follows from Appendix~\ref{app:stash}\fi)
then all accesses are
to independent random locations that also happen to correspond to 
locations that are consistent with a valid cuckoo scheme.
Finally, note that we rebuild each hash table $H_i$ after we have made $2^i$
accesses to it. Of course, some of these $2^i$ accesses 
may have successfully found their search key, while others could have been
for dummy values or for unsuccessful searches.
Nevertheless, the collection of $2^i$ distinct keys used
to perform accesses to $H_i$ will either form a standard hash table
or a cuckoo graph that supports a cuckoo hash table,
with a stash of size $s$, w.v.h.p.
Therefore, with very high probability, the adversary will not be able to
determine which among the search keys resulted in values that were found in
$H_i$, which were to keys not found in $H_i$, and which were to dummy values.

Each memory access involves at most $O(s\log n)$ reads and writes, 
to the tables $H_k$ to $H_L$.
In addition, note that each time an item is moved 
into $H_k$, either it or a surrogate dummy value may eventually
be moved from $H_k$ all the way to $H_L$, participating in $O(\log n)$
rehashings, with very high probability.
In the constant-memory case, by Theorem~\ref{thm:mapreduce},
each data-oblivious rehashing of $n_i$ items takes 
$O((n_i+s)\log (n_i+s))$ time.
In addition, in this case,
we use a stash of size $s\in O(\log n)$ and set $l= k+O(\log\log n)$.
In the case of a private memory of size $O(n^{1/r})$, 
each data-oblivious rehashing of $n_i$ items takes $O(n_i)$ time, by
Theorem~\ref{thm:mapreduce2}.
In addition, in this case, we can use a constant-size stash
(i.e., $s=O(1)$), but start with 
$k=(1/r) \log n$, so that $H_k$ fits in private memory (with all the
other $H_i$'s being in the outsourced memory).  The use of 
a constant-sized stash, however, limits us to a result that holds
with high probability, instead of with very high probability.

To achieve very high probability in this latter case, we utilize a
technique suggested in \cite{futurepaper}.  Instead of using a
constant-sized stash in each level, we combine them into a single
logarithmic-sized stash, used for all levels.  This allows the stash
at any single level to possibly be larger than any fixed constant,
giving bounds that hold with very high probability.  Instead of searching
the stash at each level, Alice must load the entire stash into private 
memory and rewrite it on each memory access.  
Therefore, we have the following.
\begin{theorem}
Data-oblivious RAM simulation of a memory of size $n$ can be done
in the constant-size private-memory case with 
an amortized time overhead of $O(\log^2 n)$, with very high
probability.
Such a simulation can be done in the case of a private memory
of size $O(n^{1/r})$ with an amortized time overhead of $O(\log n)$,
with very high probability, for a constant $r > 1$.
The space needed at the server in all cases is $O(n)$.
\end{theorem}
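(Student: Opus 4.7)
The plan is to verify two things: (i) that the sequence of memory accesses Bob sees is data-oblivious with very high probability, and (ii) that the amortized time per access matches the two claimed bounds. Most of the structural work has been done in the body of the paper; what remains is to check that the quantities add up in each parameter regime and that the error probabilities are superpolynomially small.

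For (i), I would essentially ratify the argument sketched just before the theorem. Each lookup in level $H_i$ is performed either on a genuinely fresh key $x$ (no longer present at level $i$, since after retrieval it is placed in $H_k$) or on the dummy counter $d_i$, which is incremented every access; in either case no key is ever probed twice in $H_i$ between rebuilds of $H_i$. Since $h_1,h_2$ are modeled as random oracles, the $2^i$ indices Bob sees probed in $H_i$ between consecutive rebuilds of $H_i$ are an i.i.d.\ uniform sample of locations, identically distributed for any input memory $\mathcal{M}$. Conditioning on the event that the realized keys can be placed into a valid cuckoo assignment with the stash size we allocate (very high probability, by the stash tail bounds developed for this paper) gives the required conditional identity $\Pr(S\mid\mathcal{M})=\Pr(S\mid\mathcal{M}')$.

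For (ii), I would split the cost per RAM operation into a \emph{lookup cost} and an \emph{amortized rebuild cost}. The lookup cost is an exhaustive scan of $H_k$ (size $O(1)$ or $O(n^{1/r})$, both handled in private memory), plus $O(1)$ probes at each of the $O(\log n)$ higher levels, plus scanning the stash. In the constant-memory setting each level carries its own stash of size $O(\log n)$ and must be re-scanned per level, giving lookup cost $O(\log^2 n)$; in the $O(n^{1/r})$-memory setting the pooled stash of size $O(\log n)$ is loaded into private memory once per operation, giving lookup cost $O(\log n)$. For the rebuild cost, table $H_i$ is rebuilt once every $2^i$ accesses on $n_i=\Theta(2^i)$ items. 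In the constant-memory case I would invoke Theorem~\ref{thm:mapreduce} (with $\sort(m)=O(m\log m)$) so one rebuild costs $O(2^i\log n)$, amortizing to $O(\log n)$ per access per level and $O(\log^2 n)$ across the $O(\log n)$ levels. In the $O(n^{1/r})$-memory case I would invoke Theorem~\ref{thm:mapreduce2} so one rebuild costs $O(2^i)$, amortizing to $O(1)$ per access per level and $O(\log n)$ across all levels. The starting points are chosen as described in the paper: $l=k+O(\log\log n)$ in the first case so that the small tables for which extended stash bounds are needed are handled by standard hashing, and $k=(1/r)\log n$ in the second so that $H_k$ fits in private memory. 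The outsourced storage is geometric and dominated by $|H_L|=O(n)$.

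The main obstacle is the probability calibration. Standard cuckoo-hashing stash analyses give failure probability $n^{-\Omega(k)}$ for a constant stash of size $k$, which is only inverse polynomial and not superpolynomial, and they are stated for tables of size $\Theta(n)$ rather than for the polylogarithmic levels near the top of the hierarchy. I would therefore rely on the extended stash bounds referenced in the paper (cf.\ the appendix alluded to after the cuckoo-table construction), which give superpolynomially small failure probability for logarithmic-sized stashes across all table sizes that arise, and, in the large-memory case, use the pooling trick that merges all per-level constant stashes into a single shared $O(\log n)$ stash so that no individual level's transient overflow above a constant causes a global failure. A union bound over the $\mathrm{poly}(n)$ rebuilds and lookups performed across the lifetime of the simulation then turns these per-event bounds into a single very-high-probability guarantee, yielding the theorem.
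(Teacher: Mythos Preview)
Your proposal is correct and mirrors the paper's own argument essentially step for step: the obliviousness argument (fresh keys at every level, random-oracle probes, conditioning on a valid cuckoo placement with the allotted stash), the split of the per-access cost into lookup cost plus amortized rebuild cost, the invocation of Theorems~\ref{thm:mapreduce} and~\ref{thm:mapreduce2} for the two memory regimes, the parameter choices $l=k+O(\log\log n)$ and $k=(1/r)\log n$, and the stash-pooling trick to upgrade the large-memory case from high to very high probability are exactly what the paper does in the paragraphs preceding the theorem. The only minor slip is describing the scan of $H_k$ as ``handled in private memory'' in the constant-memory case---it is of course stored and scanned on the server---but this does not affect the bound.
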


\ifFull
\section{Conclusion and Future Work}
We have given efficient schemes for performing data-oblivious RAM
simulations, which provide significant improvements for both the
constant-memory and sublinear-memory cases. 
Our techniques involved the interplay of a number of new seemingly-unrelated
results for parallel and external-memory algorithms.

Finally, we return to the issue of using completely random hash
functions in our analysis.  The existence of suitable pseudorandom
hash functions (or a random oracle) is a standard assumption in this
area, although very recent work by Ajtai~\cite{a-orwca-10} 
and Damg\aa{}rd {\it et al.}~\cite{dmn-psor-10} 
provide methods for oblivious RAM simulations without a random oracle,
at the cost of a polylogarithmic increase per operation.  We 
expect that similar results would apply here, although the polylogarithmic
increase overhead required for hashing would clearly be significant if
such computations cannot be done completely in private memory
and/or they result in polylogarithmic server accesses for each hash
computation, as in~\cite{a-orwca-10,dmn-psor-10}.
One promising approach, following the work of Arbitman, Naor, 
and Segev~\cite{arbitman2009amortized}, would be to use the result of 
Braverman~\cite{braverman2009poly} to
show that only hash functions that are polylogarithmically independent
in $n$ are required to obtain the appropriate cuckoo hashing
performance.  Such functions can be stored in polylogarithmic space
and evaluated obliviously in polylogarithmic time by reading all of
the coefficients.  As another possible approach, Aum\"{u}ller shows
that explicit hash functions, requiring sublinear space, can be used
to obtain asymptotically the same bounds on the performance of cuckoo
hashing with a stash~\cite{Au}.  However, it is not yet clear these
hash functions can be utilized in an oblivious fashion, and may require
significant private memory to store the hash functions.  As this is not
the focus of our paper, we leave further details as future work.  

There are a number of 
interesting open problems, including the following:
\begin{itemize}
\item
Can one perform data-oblivious RAM simulation in the constant-memory case
with an amortized time overhead of $O(\log n)$?
\item
Is there a
deterministic external-memory data-oblivious sorting method using
$O((N/B)\log_{M/B}(N/B))$ I/Os?
\end{itemize}
\fi

\paragraph{Acknowledgments.}
We thank Radu Sion and Peter Williams for 
helpful discussions about oblivious RAM simulation.
This research was supported in part by the National Science
Foundation under grants 0724806, 0713046, 0847968, 0915922, 0953071, and
0964473.
A full version of this paper is available as~\cite{arxiv1007.1259v2}.

{\raggedright
\bibliographystyle{abbrv}
\bibliography{cuckoo,cuckoo2,crypto,goodrich}
}

\ifFull 

\begin{figure}[p!]
\begin{algorithmic}[100]
\FOR {\textbf{each} vertex $v$}
\STATE Set $v.\mbox{component}=v$
\STATE Set $v.\mbox{visitLabel}=v$
\STATE Set $v.\mbox{parent}=v$
\STATE Set $v.\mbox{active}=\mbox{\bf true}$
\STATE Set $v.\mbox{pending}=\mbox{\bf false}$
\STATE Set $v.\mbox{complete}=\mbox{\bf false}$
\STATE Set $v.\mbox{finished}=\mbox{\bf false}$
\ENDFOR
\FOR {$3D$ iterations, where $D$ is an upper bound on $G$'s diameter}
\FOR {{\bf each} vertex $v$}
\IF {$v.\mbox{active}=\mbox{\bf true}$}
  \STATE Set $v.\mbox{active}=\mbox{\bf false}$.
  \STATE Set $v.\mbox{pending}=\mbox{\bf true}$.
  \FOR {\textbf{each} edge $(v,w)$}
    \IF[$w$'s BFS is no longer valid] {$w.\mbox{visitLabel}>v.\mbox{component}$}
      \STATE Let $w.\mbox{active} = \mbox{\bf true}$.
      \STATE Let $d$ be the minimum component number of a pending 
	      neighbor of $w$.
      \STATE Let $w.\mbox{component} = d$.
      \IF[$v$ ``wants'' to be the parent of $w$] { $v.\mbox{component}=d$}
        \IF { $v$ is minimum numbered vertex s.t. $v.\mbox{component}=d$}
          \STATE Mark $(v,w)$ as a BFS tree edge.
          \STATE Let $w.\mbox{parent}=v$.
        \ELSE
          \STATE Mark $(v,w)$ as a BFS non-tree edge.
	\ENDIF
      \ENDIF
    \ENDIF
  \ENDFOR
  \ENDIF
  \IF {$v.\mbox{pending}=\mbox{\bf true}$ and all of $v$'s children are
	  \textbf{complete}}
     \STATE Set $v.\mbox{complete}=\mbox{\bf true}$.
   \ENDIF
   \IF {$v.\mbox{complete}=\mbox{\bf true}$ and the parent of $v$ is
	   \textbf{finished}}
     \STATE Set $v.\mbox{finished}=\mbox{\bf true}$.
   \ENDIF
   \IF {$v.\mbox{finished}=\mbox{\bf true}$ and the children of $v$ are
	   \textbf{finished}}
     \STATE Mark $v$ as final; the BFS tree is fixed w.r.t.~$v$.
   \ENDIF
\ENDFOR
\ENDFOR
\end{algorithmic}

\caption{\label{fig:bfs} A MapReduce Algorithm for multiple overlapping
	breadth-first searches. The 
	``active'' flag is used to mark vertices that
	are actively pushing a wave of the BFS to the next level. The 
	``pending'' flag is used 
	to identify vertices waiting for their children to
	confirm their all belonging to the same BFS tree. The ``complete'' flag
	confirms this fact, and the ``finished'' flag confirms that the root is
	complete and the BFS tree is done. Note that every operation
	in this algorithm
	involves testing of simple fields, comparisons 
	of local fields with a fixed value (which is the same for all 
	members of the same list in a reduce step),
	or the computation of minimums or sums.
	In addition, we assume that the marking of an edge as a BFS
	edge is an atomic action, so that parallel copies of that same edge
	would not be marked.
	Thus, each step of this algorithm can be implemented in
	the sparse-streaming-MapReduce model.
	}
\end{figure}

\begin{figure}[p!]
\begin{algorithmic}[100]
\FOR {$i=k+1$ to $l$}
\STATE Read $d_i$ and store its value in a private variable, $d$.
\STATE Increment $d_i$'s value in external storage.
\IF {$\mbox{found} = \mbox{\bf false}$}
\STATE Read the entire bucket $H_i[h(x)]$.
\IF {this bucket is holding $x$}
   \STATE Set $\mbox{found} = \mbox{\bf true}$
   \STATE Store $x$ and the contents of data cell $x$ 
          in private storage---we've found it.
   \STATE Revisit $H_i[h(x)]$ marking $x$ as ``used'' 
          and the others as they were.
\ELSE
   \STATE Revisit $H_i[h(x)]$ marking all items as they were.
\ENDIF
\ELSE
  \STATE Read $H_i[h(-d)]$.
  \STATE Revisit $H_i[h(-d)]$, remarking the items in this bucket as they were.
\ENDIF
\ENDFOR
\FOR {$i=l+1$ to $L$}
\STATE Read $d_i$ and store its value in a private variable, $d$.
\STATE Increment $d_i$'s value in external storage.
\IF {$\mbox{found} = \mbox{\bf false}$}
\STATE Read $H_i[h_1(x)]$ and $H_i[h_2(x)]$ and the stash for $H_i$.
\IF {one of these locations, $y$, is holding $x$}
   \STATE Set $\mbox{found} = \mbox{\bf true}$
   \STATE Store $x$ and the contents of data cell $x$ 
          in private storage---we've found it.
   \STATE Revisit $H_i[h_1(x)]$ and $H_i[h_2(x)]$ and the stash for $H_i$,
   marking $y$ as ``used'' and the others as they were.
\ENDIF
\ELSE
  \STATE Read $H_i[h_1(-d)]$ and $H_i[h_2(-d)]$ and the stash for $H_i$.
  \STATE Revisit $H_i[h_1(-d)]$ and $H_i[h_2(-d)]$ and the stash for $H_i$,
  remarking them as they were.
\ENDIF
\ENDFOR
\end{algorithmic}
\caption{\label{fig:access} The oblivious method for accessing
our hierarchy of outsourced hash tables.}
\end{figure}

\clearpage
\begin{appendix}
\section{Proof of Cuckoo Component Size Distribution}
\label{app:cuckoo}
In this appendix, we provide a proof of Lemma~\ref{lem:cuckoosize},
which states that there exists a constant $\beta \in (0,1)$ such
that for any integer $k \geq 0$, $$ \Pr(|C_v| \geq k) \leq \beta^k, $$
where $v$ is any fixed vertex in the cuckoo graph and $C_v$ is its
component.  

\begin{proof}
We use the method of
\cite{devroye2003cuckoo}.  Consider a breadth first search starting
from some vertex $v$.  The number of neighbors of this vertex in the
cuckoo graph is bounded by a binomial random variable $\mbox{Bin}(n,1/m)$.
Now, assuming a neighbor exists, let $X_2$ be the number of additional
edges adjacent to this neighbor; it is again stochastically bounded by
a binomial random variable $\mbox{Bin}(n,1/m)$.  Continuing a breadth first
search in this manner, if we let $X_1, X_2,\ldots$ be random variables
such that $X_i$ represents the number of new adjacent edges explored at the $i$th
vertex of the breadth first search, and let $Y_1, Y_2,\ldots$ be independent binomial
random variables $\mbox{Bin}(n,1/m)$, we have
$$\Pr(|C_v| \geq k) \leq   \Pr \left (\sum_{i=1}^k X_i \geq k \right )
\leq \Pr \left (\sum_{i=1}^k Y_i \geq k \right ) = \Pr \left (\mbox{Bin}(nk,1/m) \geq k \right ).$$
Note $nk/m = k/(1+\epsilon)$.  Assuming $\epsilon \leq 1$ (the result is easily extended
to other cases) we have via a standard Chernoff bound
$$\Pr\left ( \mbox{Bin}(nk,1/m) \geq k \right ) \leq e^{-\epsilon^2 k/(3(1+\epsilon))}.$$  
Letting $\beta =e^{-\epsilon^2 /(3(1+\epsilon))}$ gives the result.
\end{proof}

\section{Proof of Data-Oblivious Cuckoo Hashing Theorem}
\label{app:sort}
In this appendix, we provide a proof of 
Theorem~\ref{thm:mapreduce}, which states that,
given a set of $n$ distinct items,
there is a data-oblivious algorithm for constructing
a cuckoo hash table of size $O(n)$ with a stash of size $s$ in 
$O(\sort(n+s))$ time.

\begin{proof}
Combining Theorem~\ref{thm:mr-sim} and 
Lemma~\ref{lem:linear} we get that there is a
data-oblivious way to simulate the MapReduce cuckoo hash construction
algorithm in $O(\sort(n))$ time.
But the output of this algorithm is not a table of values. It is instead a
set, $S$, of pairs, $(i,x)$, 
where $i$ is an index of a non-empty cell in our hash table $T$,
and $x$ is the value that should be stored in $T[i]$.
We use the convention here that elements belonging to the stash are
identified with pairs of the form $(0,x)$.
(Note: for simplicity, we also 
assume here that the two tables $T_1$ and $T_2$ are
concatenated into one table.)
To convert $S$ to a standard representation for a cuckoo hash table, 
we perform the following
(data-oblivious) algorithm.

\begin{algorithmic}[100]
\FOR {each $(i,x)$ pair in $|S|$}
\STATE Create a tuple, $(i,S,x)$
\ENDFOR
\FOR {$i=1$ to $n$}
  \STATE Create a tuple, $(i,T,0)$
\ENDFOR
\STATE Sort the tuples from $S$ and $T$ (data-obliviously) by first 
and second coordinates, in a list $L$.
\FOR {$i=1$ to $s$}
\STATE Set $H[i]$ to $x$ if $L[i]=(0,S,x)$, otherwise set $H[i]=0$.
\ENDFOR
\STATE Scan $L$ from left-to-right, copying  
the value $x$, from a $(i,S,x)$ tuple, with $i\not=0$, to its
successor tuple, $(i,T,0)$, making it be $(i,T,x)$.
\STATE Sort the set of tuples in $L$ (data-obliviously) by second coordinates.
\STATE Truncate this list of tuples to only contain those with a $T$.
\FOR {each tuple $(i,T,x)$}
  \STATE Set $T[i]=x$
\ENDFOR
\end{algorithmic}
This gives us a standard table format for the cuckoo hashing scheme,
together with its stash, $H$.
\end{proof}

\section{Proof of Bounds on Cuckoo Hashing when the 
Stash is Larger than Constant Size}
\label{app:stash}

We present a sketch of the result demonstrating the value of a stash
that is greater than constant sized, which we require for some of our
results.  We note that the result here essentially follows using
arguments from \cite{kmw-chs-09}, but requires some non-trivial
additional work, which we provide here.  

In what follows let $m$ be the size of each of the two subtables
of the cuckoo hash table.  Recall that
$n$ is the number of keys to be stored by Alice, but the subtables
used in our construction can be much smaller than $n$; indeed, we aim
for $m$ to be polylogarithmic in $n$, while maintaining failure probabilities
that are polynomial in $n$.  This will require stashes of size $O(\log n)$.

To start, we recall some notation from \cite{kmw-chs-09}.  Let $C_v$
denote the (edges of) the connected component of a node $v$.  Then
$\gamma(G)$ be defined to be the smallest number of graph edges which
should be removed from $G$ so that no cycle remains.  (That is,
$\gamma(G)$ is the cyclomatic number of $G$.)  Denote by $T(G)$ the
number of cyclic components in $G$.  A simple fact shown in
\cite{kmw-chs-09} is that the number of items placed in the stash equals $\gamma(G) -
T(G)$.  Bounds in \cite{kmw-chs-09} are obtained by bounding $\gamma(C_v)$,
the excess for a connected component, and applying stochastic dominance arguments.

Our starting point is \cite{kmw-chs-09}[Lemma 2.8], which states
the following.

\begin{lemma}[(Lemma 2.8 of \cite{kmw-chs-09})]
\label{lem: b-bound} For every vertex $v$ and $t,k,n \ge 1$,
\[
   \Pr(\gamma(C_v) \geq t\ |\ |C_v| = k) \leq \left(\frac{3e^5k^3}{m}\right)^t.
\]
\end{lemma}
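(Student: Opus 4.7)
The plan is to bound this conditional probability by a union-bound argument over ``witness'' substructures for the event. If $|C_v| = k$ and $\gamma(C_v) \geq t$, then the component of $v$ must contain a connected subgraph $H$ with $v \in H$, excess at least $t$, and at most $k - t + 1$ vertices (since a connected graph with $k$ edges and excess $\gamma$ has exactly $k - \gamma + 1$ vertices). My strategy is to enumerate all such possible witness structures and sum their individual probabilities of occurring.

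First, I would decompose a witness as: (i) a choice of bipartite vertex set $V = V_1 \cup V_2$ with $V_1 \subseteq T_1$, $V_2 \subseteq T_2$, containing $v$, and $|V| \leq k - t + 1$, contributing at most $\binom{m}{|V_1|}\binom{m}{|V_2|}$ possibilities; (ii) a choice of the $k$ items whose edges realize the component, giving at most $\binom{n}{k}$ options; and (iii) a designation of which $t$ of these $k$ items play the role of ``extra'' cycle-creating edges, contributing $\binom{k}{t}$ options. For each specific witness, I would bound the probability that all $k$ items' hash values land within $V_1 \times V_2$ by roughly $(|V_1| \cdot |V_2|/m^2)^k$, and I would use Cayley-type counts for labeled trees to enumerate the spanning-tree skeleton among the tree-edges of the witness.

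Next, I would use standard estimates such as $\binom{N}{K} \leq (eN/K)^K$, $n \leq m$, and the tree count $s^{s-2}$ to collapse the resulting product expressions into the claimed form $(3e^5 k^3/m)^t$. The factor $k^3/m$ per unit of excess arises from the combination of the $\binom{k}{t}$ term (contributing roughly a factor of $k$ per excess unit, accounting for the choice of extra edges) with the per-edge containment probability and the sum over the cardinality of $V_1, V_2$; the constants $3$ and $e^5$ absorb loose approximations in the binomial estimates and the geometric-series summations over the free parameters.

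The main obstacle will be properly handling the conditioning on $|C_v| = k$, since the structural constraint of a connected component of size exactly $k$ couples to the joint distribution of the hash values in a non-trivial way. I would address this either by first bounding the joint probability $\Pr(\gamma(C_v) \geq t \wedge |C_v| = k)$ via the union bound and then dividing by a suitable lower bound on $\Pr(|C_v| = k)$, or more cleanly by noting that the witness enumeration already implicitly conditions on the component structure, so that the resulting bound applies directly to the conditional probability. A secondary subtlety is avoiding overcounting: the same realized component can be represented by multiple witness choices (different designations of which $t$ edges are ``extra''), but this only makes the union bound looser, which is tolerable since the stated constants are generous.
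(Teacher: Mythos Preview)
The paper does not prove this lemma at all; it is quoted verbatim as Lemma~2.8 of \cite{kmw-chs-09} and used as a black box in Appendix~\ref{app:stash}. There is therefore no proof in the paper to compare your attempt against.

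For what it is worth, your outline is in the spirit of the original argument in \cite{kmw-chs-09}: a union bound over witness substructures (a vertex set for the component, a spanning tree on it, and a designated set of $t$ ``excess'' edges), combined with Cayley-type tree counts and the estimate $\binom{N}{K}\le (eN/K)^K$. The one place where your plan is shaky is the handling of the conditioning on $|C_v|=k$. Your first proposed route---bounding the joint probability and dividing by a lower bound on $\Pr(|C_v|=k)$---is unworkable, because that probability decays geometrically in $k$ (cf.\ Lemma~\ref{lem:cuckoosize}), and dividing by something of order $\beta^k$ would swamp any savings in the numerator. Your second route, arguing that the witness enumeration already respects the conditioning, is the right instinct and is essentially what \cite{kmw-chs-09} does: once the vertex set of $C_v$ is fixed and one conditions on $k$ edges falling inside it, those edges are (essentially) uniform over the induced bipartite pair set, and the excess-$t$ event is then bounded by enumerating which $t$ of the $k$ edges are redundant and where they land. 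Making that precise is the only real work left in your sketch.
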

Recall also we have shown already in Lemma~\ref{lem:cuckoosize} that,
for a constant $\beta$,
$$ \Pr(|C_v| \geq k) \leq \beta^k.$$
Combining this we see that 
\begin{align}
    \Pr(\gamma(C_v) \geq t)
    &\leq \sum_{k=1}^\infty \Pr( \gamma(C_v) \geq t \ |\ |C_v| = k) \cdot \Pr(|C_v| \geq k) \notag \\
    &\leq \sum_{k=1}^\infty \min \left ( \left(\frac{3e^5k^3}{m}\right)^t,1 \right) \cdot \beta^k.
\end{align}
In \cite{kmw-chs-09} it is noted that the right hand side above is $O(m^{-t})$ when
$t$ is a constant, but here, 
where we may have stashes larger than constant size, 
we need to consider $t$ of up to $O(\log n)$.  
However, in the summation above, we need only concern ourselves with values of 
$k$ that are $O(\log^2 n)$, as larger $k$ values
yield superpolynomially small terms in the summation, in which case as long as 
$m$ is $\Omega(\log^7 n)$, then we have 
$\Pr(\gamma(C_v) \geq t)$ is $m^{-\Omega(t)}$.  Specifically, we claim that   
$\Pr(\gamma(C_v) \geq j+1)$ is at most $m^{-1-\alpha j}$ for some constant $\alpha$. 
Note that this derivation requires $m$ to be polylogarithmic in $n$.  

Now, following the derivation at end of Theorem 2.2 of \cite{kmw-chs-09}, we have
that the probability that the stash exceeds a total size $s$, where $s$
should be taken as $\Theta(\log n)$, is given by
\begin{align*}
    \Pr(\gamma(G)-T(G) \ge s)
    &\le \sum_{k=1}^{2m} \binom{2m}{k} s^k m^{-\alpha s-k}\\
    &\le \sum_{k=1}^{2m} \left(\frac{2me}{k}\right)^k s^k m^{-\alpha s-k}\\
    &= m^{-\alpha s} \sum_{k=1}^{2m} \left(\frac{2es}{k}\right)^k \\
    &= m^{-\Omega(s)}.
\end{align*}
The last line follows since the summation is polynomial in $n$, and 
$m^{-\Omega(s)}$ is superpolynomially small in $n$, given our restrictions
on $m$ and $s$.  

(We note there are some changes from the end of Theorem 2.2 of
\cite{kmw-chs-09}; specifically, there is a typographical error, where
in the original theorem they have $k^s$ in place of $s^k$.  The term
is a bound for the number of ways $k$ positive numbers can sum to $s$,
which is easily bounded by $s^k$; the error does not affect the
results of \cite{kmw-chs-09} but the difference is important here.)

It would be interesting to obtain better bounds on stashes of greater
than constant size.  In this case, it could simplify our construction,
if we could instead use cuckoo hash tables from the initial point of 
our construction with constant private memory.

We emphasize that this result that keys can be placed in a cuckoo hash
table with a suitably sized stash with very high probability is used
in two places in our argument.  First, it must apply to the data items
actually stored in the hash table; if the items cannot be stored, the
RAM simulation fails.  Second, it must apply to the sequence of
locations examined during the actual search process.  That is, in
running our simulation, we sometime search for real items and
sometimes for dummy items in the hash tables.  To Bob, searching for
dummy items must not be distinguishable from searching for real items.
In particular, we must have it be the case that in a given cuckoo hash
table, the real and dummy items searched for correspond to a valid
placement in the cuckoo hash table.  As our argument above does not
distinguish between real and dummy items, we can simply union over the
two cases (real items, and the mix of real and dummy items) to
maintain our high probability bound.

\section{A Flaw in the Construction of Pinkas and Reinman}
\label{app:flaw}
As mentioned above,
Pinkas and Reinman~\cite{pr-orr-20} published an oblivious RAM simulation
result for the case where Alice maintains a 
constant-size private memory, claiming that Alice can achieve
an expected amortized
overhead of $O(\log^2 n)$ while using $O(n)$ storage space at the
data outsourcer, Bob.
Unfortunately, their construction contains a flaw that allows Bob to
learn Alice's access sequence, with high probability, in some cases,
which our construction avoids.

Like many of the recent oblivious RAM simulation results, their construction
involves the use of a hierarchy of $O(\log n)$ 
hash tables, which in their case are cuckoo hash tables (without
stashes) that 
start out (at their highest level) being of a constant size and
double in size with every level.

Consider now
a stage in the simulation when most of the levels are
full and the client does a lookup for a set of items,
$x_1,\ldots, x_k$, which
are still on the bottom level, having never been accessed
before. In this construction, Alice
will do a cuckoo lookup for
each of the $x_i$'s in all the levels in this case, finding them only at
the bottom. 
But at the higher (smaller) levels, these $x_i$'s are not in
the cuckoo tables. Moreover, these $x_i$'s were not considered (even
as ``dummy elements'') when
constructing these cuckoo tables. So it is quite likely,
especially for the really small cuckoo tables, that the sequence of
lookups done for $x_1,\ldots,x_k$ will turn out to produce a set of lookups
that cannot possibly be for a set of items that are all contained in
the cuckoo table. That is, they are likely to form a connected
component in the cuckoo graph with more than one cycle (with at least
constant probability for the really small tables). Thus, the
adversary will learn in this case that Alice was searching for at
least one item that was not found in that small table. Put another
way, with reasonably high probability (in terms of $k$), this access
sequence would be distinguishable from one, say, where Alice kept
looking up the same item, $x$, over and over again, for this latter
sequence will always succeed in finding dummy elements in the smaller
tables (hence, will always produce a set of valid cuckoo-table
lookups). 

Incidentally, in a private communication,
we have communicated this concern to Pinkas and
Reinman, and they have indicated that they
plan to repair this flaw in the journal version of their
paper.

\end{appendix}

\fi
\end{document}